\title{The Efficiency of MapReduce in Parallel External Memory}
\author{Gero Greiner \and Riko Jacob}
\date{{Institute of Theoretical Computer Science \\ ETH Zurich, Switzerland\\ \texttt{\{greinerg,rjacob\}@inf.ethz.ch}}}
\institute{Institute of Theoretical Computer Science, ETH Zurich, Switzerland\\ \texttt{\{greinerg,rjacob\}@inf.ethz.ch}}
\newif\if@fullpaper
\begin{document}

\maketitle
\begin{abstract}
 Since its introduction in 2004, the MapReduce framework has become one of the standard approaches in massive distributed and parallel computation.
 In contrast to its intensive use in practise, theoretical footing is still limited and only little work has been done yet to put MapReduce on a par with the major computational models.
 Following pioneer work that relates the MapReduce framework with PRAM and BSP in their macroscopic structure, we focus on the functionality provided by the framework itself, considered in the parallel external memory model (PEM).
 In this, we present upper and lower bounds on the parallel I/O-complexity that are matching up to constant factors for the shuffle step.
 The shuffle step is the single communication phase where all information of one MapReduce invocation gets transferred from map workers to reduce workers.
 Hence, we move the focus towards the internal communication step in contrast to previous work.
 The results we obtain further carry over to the BSP$^*$ model.
 On the one hand, this shows how much complexity can be ``hidden'' for an algorithm expressed in MapReduce compared to PEM.
 On the other hand, our results bound the worst-case performance loss of the MapReduce approach in terms of I/O-efficiency.  
\end{abstract}

\section{Introduction}
The MapReduce framework has been introduced by Dean and Ghemawat~\cite{MapReduce04} to provide a simple parallel model for the design of algorithms on huge data sets.
It allows an easy design of parallel programs that scale to large clusters of hundreds or thousands of PCs.
Since its introduction in 2004, apart from its intensive use by Google for tasks involving petabytes of data each day~\cite{MapReduce10}, the open source implementation Hadoop~\cite{MR_Hadoop} has found many applications including regular use by companies like Yahoo!, eBay, Facebook, Twitter and IBM. 
This success can be traced back to both the short development time of programs even for programmers without experience in parallel and distributed programs, and the fast and fault tolerant execution of many tasks.

However, there is also criticism passed on current progression towards MapReduce~\cite{MR_StepBackwards,MR_Comparison09,MR_Comparison10}.
This includes criticism on the applicability of MapReduce in all its simplicity to tasks where more evolved techniques have been examined already.
Hence, it is of high importance to gain an understanding when and when not the MapReduce model can lead to implementations that are efficient in practise.
In this spirit, MapReduce has been compared to PRAM~\cite{MR_Model} and BSP~\cite{MR_Nodari11} by presenting simulations in MapReduce. 
But theoretical foundations are still evolving.

Especially in high performance computing, the number of gigaflops provided by today's hardware is more than sufficient, and delivering data, i.e., memory access, usually forms the bottleneck of computation.
In clusters consisting of a large number of machines, the communication introduces another bottleneck of a similar kind.
Therefore, we believe that the (parallel) external memory model~\cite{av88,pem} provides an important context in which MapReduce should be examined.

In this paper, we provide further insights, contributing to the discussion on applicability of MapReduce, in that we shed light on the I/O-efficiency loss when expressing an algorithm in MapReduce.
On the other hand, our investigation bounds the complexity that can be ``hidden'' in the framework of MapReduce in comparison to the parallel external memory (PEM) model.
This serves for a direct lower bound on the number of rounds given a lower bound on the I/O-complexity in the PEM model.
The main technical contribution of this work is the consideration of the \emph{shuffle step} which is the single communication phase between processors / workers during a MapReduce round.
In this step, all information is redistributed among the workers.

\paragraph{MapReduce Framework}
The MapReduce framework can be understood as an interleaved model of parallel and serial computation.
It operates in rounds where within one round the user-defined serial functions are executed independently in parallel.
Each round consists of the consecutive execution of a map, shuffle and reduce step.
The input is a set of $\langle key, value \rangle$ pairs. 
Since each mapper and reducer is responsible for a certain (known) key, w.l.o.g. we can rename keys to be contiguous and starting with one.

A round of the MapReduce framework begins with the parallel execution of independent \emph{map} operations.
Each map operation is supplied with one $\langle key, value \rangle$ pair as input and generates a number of intermediate $\langle key, value \rangle$ pairs.
To allow for parallel execution, it is important that map operations are independent from each others and rely on a single input pair.
In the \emph{shuffle} step, the set of all intermediate pairs is redistributed s.t. lists of pairs with the same key are available for the reduce step.
The \emph{reduce} operation for key $k$ gets the list of intermediate pairs with key $k$ and generates a new (usually smaller) set of pairs. 

The original description, and current implementations realise this framework by first performing a \emph{split} function to distribute input data to workers.
Usually, multiple map and reduce tasks are assigned to a single worker.
During the map phase, intermediate pairs are already partitioned according to their keys into sets that will be reduced by the same worker.
The intermediate pairs still reside at the worker that performed the map operation and are then pulled by the reduce worker.
Sorting the intermediate pairs of one reduce worker by key finalises the shuffle phase.
Finally, the reduce operations are executed to complete the round.
A common extension of the framework is the introduction of a \emph{combiner} function that is similar in spirit to the reduce function.
However, a combine function is already applied during the map execution, as soon as enough intermediate pairs with the same key have been generated.

Typically, a MapReduce program involves several rounds where the output of one round's reduce functions serves as the input of the next round's map functions~\cite{MapReduce10}.
Although most examples are simple enough to be solved in one round, there are many tasks that involve several rounds such as computing page rank or prefix sums.
In this case a consideration of the shuffle step becomes most important,
especially when map and reduce are I/O-bounded by writing and reading intermediate keys. 
If map and reduce functions are hard to evaluate and large data sets are reduced in their size by the map function, it is important to find evolved techniques for the evaluation of these functions.
However, this shall not be the focus of our work.

One can see the shuffle step as the transposition of a (sparse) matrix:
Considering columns as origin and rows as destination, there is a non-zero entry $x_{ij}$ iff there is a pair $\langle i, x_{ij} \rangle$ emitted by the $j$th map operation (and hence will be sent to reducer $i$).
Data is first given partitioned by column, and the task of the shuffle step is to reorder non-zero entries row-wise.
Note that there is consensus in current implementations to use a partition operation during the map operation as described above.
This can be considered as a first part of the shuffle step.

\paragraph{Related work}
Feldman et al.~\cite{MR_Streaming} started a first theoretical comparison of MapReduce and streaming computation.
They address the class of symmetric functions (that are invariant under permutation of the input) and restrict communication and space for each worker to be polylogarithmic in the input size $N$ (but mention that results extend to other sublinear functions).
In~\cite{MR_Model}, Karloff et al. state a theoretical formulation of the MapReduce model where space restriction and the number of workers is limited by $\O{N^{1-\epsilon}}$.
Similarly, space restrictions limit the number of elements each worker can send or receive.
In contrast to other theoretical models, they allow several map and reduce tasks to be run on a single worker.
For this model, they present an efficient simulation for a subclass of EREW PRAM algorithms.
Goodrich et al.~\cite{MR_Nodari11} introduce the parameter $M$ to restrict the number of elements sent or received by a machine.
Similar to the external memory model where computation is performed in a fast cache of size $M$, this introduces another parameter additionally to the input size $N$.
Their MapReduce model compares to the BSP model with $M$-relation, i.e., a restricted message passing degree of $M$ per super-step.
The main difference, is that in all the MapReduce models information cannot reside in memory of a worker, but has to be resent to itself to be preserved for the next round.
A simulation of BSP and CRCW PRAM algorithms is presented based on this model.

The restriction of worker-to-worker communication allows for the number of rounds to be a meaningful performance measure.
As observed in~\cite{MR_Model} and~\cite{MR_Nodari11}, without restrictions on space / communication there is always a trivial non-parallel one-round algorithm where a single reducer performs a sequential algorithm.

On the experimental side, MapReduce has been applied to multi-processor / multi-core machines with shared memory~\cite{MR_MultiCore}.
They found several classes of problems that perform well in MapReduce even on a single machine.

As described above, the shuffle step can be considered as a matrix transposition.
Following the model of restricting communication, the corresponding matrix is restricted to have a certain maximum number of non-zero entries per column and row.
Previous work considered the multiplication of a sparse matrix with one vector~\cite{TCS10} and several vectors simultaneously~\cite{mfcs10} in the external memory model.
The I/O-complexity of transposing a dense matrix was settled in the seminal paper by Aggarval and Vitter~\cite{av88} that introduced the external memory model.
A parallel version of the external memory model was proposed by Arge et al.~\cite{pem}.

\paragraph{Our contribution}
We provide upper and lower bounds on the parallel I/O-complexity of the shuffle step.
In this, we can show that current implementations of the MapReduce model as a framework are almost optimal in the sense of worst-case asymptotic parallel I/O-complexity.
This further yields a simple method to consider the external memory performance of an algorithm expressed in MapReduce.
Since we consider the PEM model, we assume a shared memory.
However, our results can be applied to any layer of the memory hierarchy, and also extend to the communication layer, i.e. the network.
This is expressed by a comparison to the BSP$^*$ model.

Following the abstract description of MapReduce~\cite{MR_Nodari11,MR_Model}, the input of each map function is a single $\langle key, value\rangle$ pair.
The output of reduce instead can be any finite set of pairs.
In terms of I/O complexity, however, it is not important how many pairs are emitted, but rather the size of the input / output matters.

We analyse several different types of map and reduce functions.
For map, we first consider an arbitrary order of the emitted intermediate pairs.
This is most commonly phrased as the standard shuffle step provided by a framework.
Another case is that intermediate pairs are emitted ordered by their key.
Moreover, as a last case, we allow evaluations of a map function in parallel by multiple processors.
For reduce, we consider the standard implementation which guarantees that a single processor gets data for the reduce operations ordered by intermediate key.
Additionally, we consider another type of reduce which is assumed to be associative and parallelisable. 
This is comparable to the combiner function described before (cf. \cite{MapReduce04}).
In this case, the final result of reduce will be created by a binary tree-like reduction of the partial reduce results that were generated by processors holding intermediate results from the same key.
For the cases where we actually consider the evaluation of map and reduce functions, we assume that input / output of a single map / reduce function fits into internal memory.
We further assume in these cases that input and output read by a single processor does not exceed the number of intermediate pairs it accesses.
Otherwise, the complexity of the task can be dominated by reading the input, writing the output respectively, which leads to a different character that is strongly influenced by the implementation of map and reduce.
For the most general case of MapReduce, we simply assume that intermediate keys have already been generated by the map function, and have to be reordered to be provided as a list to the reduce workers.

For our lower bounds to be matching, we have to assume that the number of messages sent and received by a processor is restricted. 
More precisely, for $N_M$ being the number of map operations and $N_R$ the number of reducers, we require that each reducer receives at most $N_M^{1-\gamma}$ intermediate pairs, and each mapper emits at most $N_R^{1- \gamma}$ where $\gamma$ depends on the type of map operation.
However, for the first and the second types of map as described above, any $\gamma > 0$ is sufficient.

\paragraph{Outline}
In the next Section, we give a description of the PEM.
This will be followed by a comparison of the PEM and BSP$^*$ model.
In Section~\ref{sec:algos}, we present algorithms for the shuffle step for all considered map and reduce types.
Our lower bounds that match up to constant factors are given in Section~\ref{sec:lower_bounds}.



\section{The parallel external memory model}
\label{sec:pem}
The classical external memory model introduced in~\cite{av88} assumes a two-layer memory hierarchy.
It consists of an internal memory (cache) that can hold up to $M$ elements, and an external memory (disk) of infinite size, organised in blocks of $B$ elements.
Computations and rearrangement of elements can only be done with elements residing in internal memory.
With one I/O, a block can be moved between internal and external memory.
This models quite well the design of current hardware with a hierarchy of faster and smaller caches towards the CPU.
Disk accesses are usually cost-intensive and hence many contiguous elements are transferred at the same time.

As a parallel version of this model, the PEM was proposed by Arge et al.~\cite{pem} replacing the single CPU-cache by $P$ parallel caches and CPUs that operate on them (cf. Figure~\ref{fig:pem}).
External memory is treated as shared memory, and within one parallel I/O, each processor can perform an input or an output of its internal memory to disk.
Similar to the PRAM model, one has to define how overlapping access is handled.
In this paper, we assume concurrent read, exclusive write (CREW).
However, the results can be easily modified for CRCW or EREW.

%
\begin{figure}[!hbt]
	\begin{center}
	\begin{tikzpicture}[scale=.3]
		\draw (-18,0) rectangle (-11,1);
		\foreach \pos in {1,...,6} \draw (\pos-18,0)--(\pos-18,1);
		\node at (-15,1.35) [scale=.6, single arrow, fill=gray!30!white, rotate=90, single arrow head extend=4, single arrow head indent=.5] {};
		\node at (-14,1.65) [scale=.6, single arrow, fill=gray!30!white, rotate=270, single arrow head extend=4, single arrow head indent=.5] {};
		\draw[xshift=-14.5cm] (-.9,3.7) -- (-.9,1.8) (-.6,3.7) -- (-.6,1.8) (-.3,3.7) -- (-.3,1.8) (0,3.7) -- (0,1.8) (.3,3.7) -- (.3,1.8) (.6,3.7) -- (.6,1.8) (.9,3.7) -- (.9,1.8);
		\draw[fill=gray, thick, rounded corners] (-15.7,2) rectangle (-13.3,3.5);
		\draw (-14.5,2.75) node {\tiny \textbf{CPU}};
		\node at (-13.5,-1.2) [scale=1.4, single arrow, fill=gray!30!white,  rotate=90, single arrow head extend=4, single arrow head indent=.5] {};
		\node at (-15.5,-.8) [scale=1.4, single arrow, fill=gray!30!white, rotate=270, single arrow head extend=4, single arrow head indent=.5] {};
		
		\draw (-9,0) rectangle (-2,1);
		\foreach \pos in {1,...,6} \draw (\pos-9,0)--(\pos-9,1);
		\node at (-6,1.35) [scale=.6, single arrow, fill=gray!30!white, rotate=90, single arrow head extend=4, single arrow head indent=.5] {};
		\node at (-5,1.65) [scale=.6, single arrow, fill=gray!30!white, rotate=270, single arrow head extend=4, single arrow head indent=.5] {};
		\draw[xshift=-5.5cm] (-.9,3.7) -- (-.9,1.8) (-.6,3.7) -- (-.6,1.8) (-.3,3.7) -- (-.3,1.8) (0,3.7) -- (0,1.8) (.3,3.7) -- (.3,1.8) (.6,3.7) -- (.6,1.8) (.9,3.7) -- (.9,1.8);
		\draw[fill=gray, thick, rounded corners] (-6.7,2) rectangle (-4.3,3.5);
		\draw (-5.5,2.75) node {\tiny \textbf{CPU}};
		\node at (-4.5,-1.2) [scale=1.4, single arrow, fill=gray!30!white,  rotate=90, single arrow head extend=4, single arrow head indent=.5] {};
		\node at (-6.5,-.8) [scale=1.4, single arrow, fill=gray!30!white, rotate=270, single arrow head extend=4, single arrow head indent=.5] {};
		
		\draw (0,0) rectangle (7,1);
		\foreach \pos in {1,...,6} \draw (\pos,0)--(\pos,1);
		\node at (3,1.35) [scale=.6, single arrow, fill=gray!30!white, rotate=90, single arrow head extend=4, single arrow head indent=.5] {};
		\node at (4,1.65) [scale=.6, single arrow, fill=gray!30!white, rotate=270, single arrow head extend=4, single arrow head indent=.5] {};
		\draw[xshift=3.5cm] (-.9,3.7) -- (-.9,1.8) (-.6,3.7) -- (-.6,1.8) (-.3,3.7) -- (-.3,1.8) (0,3.7) -- (0,1.8) (.3,3.7) -- (.3,1.8) (.6,3.7) -- (.6,1.8) (.9,3.7) -- (.9,1.8);
		\draw[fill=gray, thick, rounded corners] (2.3,2) rectangle (4.7,3.5);
		\draw (3.5,2.75) node {\tiny \textbf{CPU}};
		\node at (4.5,-1.2) [scale=1.4, single arrow, fill=gray!30!white,  rotate=90, single arrow head extend=4, single arrow head indent=.5] {};
		\node at (2.5,-.8) [scale=1.4, single arrow, fill=gray!30!white, rotate=270, single arrow head extend=4, single arrow head indent=.5] {};
		
		\node at (9, .5) {$\dots$};
		
		\draw (11,0) rectangle (18,1);
		\foreach \pos in {1,...,6} \draw (\pos+11,0)--(\pos+11,1);
		\node at (14,1.35) [scale=.6, single arrow, fill=gray!30!white, rotate=90, single arrow head extend=4, single arrow head indent=.5] {};
		\node at (15,1.65) [scale=.6, single arrow, fill=gray!30!white, rotate=270, single arrow head extend=4, single arrow head indent=.5] {};
		\draw[xshift=14.5cm] (-.9,3.7) -- (-.9,1.8) (-.6,3.7) -- (-.6,1.8) (-.3,3.7) -- (-.3,1.8) (0,3.7) -- (0,1.8) (.3,3.7) -- (.3,1.8) (.6,3.7) -- (.6,1.8) (.9,3.7) -- (.9,1.8);
		\draw[fill=gray, thick, rounded corners] (13.3,2) rectangle (15.7,3.5);
		\draw (14.5,2.75) node {\tiny \textbf{CPU}};
		\node at (15.5,-1.2) [scale=1.4, single arrow, fill=gray!30!white,  rotate=90, single arrow head extend=4, single arrow head indent=.5] {};
		\node at (13.5,-.8) [scale=1.4, single arrow, fill=gray!30!white, rotate=270, single arrow head extend=4, single arrow head indent=.5] {};

		\draw[thick, lightgray!1] (-20,-3) -- (20,-3);
		\draw (-20,-2) -- (20,-2);
		\draw (-20,-3) -- (20,-3);
		\draw (-21,-2.5) node {$\cdots$};
		\draw (21,-2.5) node {$\cdots$};
		\foreach \pos in {1,...,39} \draw (\pos-20,-2)--(\pos-20,-3);
		\foreach \pos in {1,...,8} \draw[very thick] (\pos*5-22,-2)--(\pos*5-22,-3);
		\draw[thick, draw=black] (-20,-2) -- (20,-2);
		\draw[thick, draw=black] (-20,-3) -- (20,-3);

	\end{tikzpicture}
	\caption{The parallel external memory model (PEM).}
	\label{fig:pem}
	\end{center}
\end{figure}
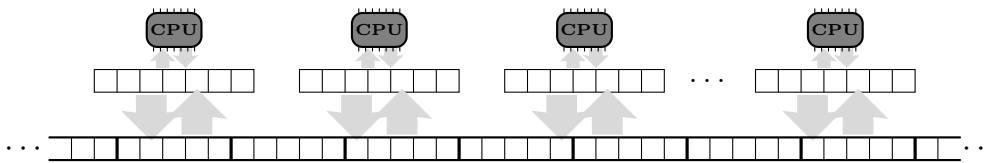

\section{A comparison to BSP$^*$}
For a BSP model comparable with PEM, we assume that computational costs are dominated by communication.
One of the parameters in the BSP model is the $h$-relation, the number of messages each processor is allowed to send and to receive.
Additionally, a latency / startup cost $L$ per super-step is assumed.
The total cost of an algorithm with $T$ super-steps is $T\cdot(h + L)$.
All this conforms with the definition of BSP in~\cite{bsp}, when normalising to network throughput $g=1$.
However, as a significant change compared to~\cite{bsp} one can define the latency $L$ in relation to the number of connections each processor has to establish.
This is justified for hand shake protocols but also for the encapsulation process of messages performed by the network layers in the OSI model, i.e., todays network protocols.
Hence, an incentive is given to send a number of elements of magnitude comparable to the connection-latency to the same processor.
Another way to express this is the BSP$^*$ model~\cite{bsp_star} that encourages block-wise communication by defining a cost model $g h \ceil{s/B} + L$ per super-step for maximum message length $s$.

The version that is best comparable with PEM is the BSP$^*$ with 1-relation (i.e., $h=1$).
Assuming $g > L$, we can restrict ourselves to $s \leq B$.
Otherwise a message has to be divided into multiple super-steps which only changes costs by a constant factor.
Any such 1-BSP$^*$ algorithm with $\ell$ super-steps on input size $N$ can be simulated in the EREW PEM with $2 \ell$ parallel I/O, if input and output are equally distributed over internal memories and $M$ is sufficiently large.
In this, one parallel-output of the blocks that are to be sent is done per super-step.
In a following parallel input, these blocks are input by their destined processor.
Hence, all our lower bounds hold for the 1-BSP$^*$ where $M$ can be set to any arbitrarily large value (e.g. $N$) such that $N/P \leq M$.
Note that the simulation can be non-uniform which, however, does not change the statement.

Similarly, a 1-BSP$^*$ algorithm can be derived from an EREW PEM algorithm for $M = N/P$.
For each output that is made by the PEM algorithm, the corresponding block is sent to the processor that will read the block nearest in the future.
In general, this implies that multiple blocks can be sent to the same processor violating the 1-relation.
However, our algorithms are easily transformed to avoid this problem. 
Furthermore, assignment of input and output is not necessary in the BSP model, and the applied parallel sorting algorithm is even derived from a BSP algorithm (see~\cite{pem}).
We omit a detailed description here since it is not the focus of our work.


\section{Upper bounds for the shuffle step}
\label{sec:algos}
We start with the description of some basic building blocks that are performed multiple times within our algorithms.
To form a block from elements that are spread over several internal memories, processors can communicate elements in a tree-like fashion to form the complete block in $\O{\log \mins{P,B}}$ I/Os.
This is referred to as \emph{gathering}.
If a block is created by computations involving elements from several processors (e.g., summing multiple blocks) still $\O{\log P}$ I/Os are sufficient.
Similarly, a block can be spread to multiple processors in $\O{\log P}$ I/Os (\emph{scattering}).
\if@fullpaper
If $n$ blocks for each processor have to be scattered / gathered independently, this can be serialised such that $\O{n + \log P}$ I/Os are sufficient.
Note that for all gather and scatter tasks, the communication structure has to be known to each participant.
This is for example the case when participating processors constitute an ordered set which is known to all.
\fi
Additionally, we require the computation of prefix sums.
This task has been extensively studied in parallel models.
For the PEM model see~\cite{pem} for a description.

\if@fullpaper
\paragraph{Range-bounded load-balancing}
\label{sec:RVdist}
For load balancing reasons, the following task will appear several times. 
Given $n$ tuples $(i, x)$ that are located in contiguous external memory, ordered by some non-unique key $i \in \{1,\dots,m\}$ for $m \leq n$.
Assign the $n$ tuples to $P$ processors such that each processor gets $\O{n/P}$ tuples assigned to it, but keys are within a range of size $\O{m/P}$.

The task can be solved by dividing the $P$ processors into $\lceil P/2 \rceil$ \emph{volume processors} 
 and $\lfloor P/2 \rfloor$ \emph{range processors}. 
First, data is assigned to the volume processors by volume.
In this, each volume processor gets a consecutive piece of $\lceil 2n/P \rceil$ tuples assigned to it. 
For communication purpose, we assume that for each processor there is an exclusive block reserved for messages in external memory denoted as \emph{inbox}.

For the next step, think of the ordered set of keys $(1,\dots,m)$ being partitioned into $P/2$ ranges of at most $\lceil 2m/P \rceil$ continuous keys each (further referred to as \emph{key range}).
To cope with the problem of having too many different keys assigned to the same volume processor, we reserve the $i$th range processor to become responsible for tuples within the $(i+1)$th key range. 
Now, each volume processor scans its assigned area and keeps track of the position in external memory where a new key range begins.
This requires only $\O{\frac{n}{PB}}$ I/Os for scanning the assigned tuples while one block in memory is reserved to buffer and output the starting positions of a new range.

Afterwards, each volume processors with more than $\ceil{2m/P}$ keys assigned to it, created a list of memory positions where a new key range begins.
This list is than scattered block-wise to the range processors reserved for the corresponding key ranges. 
Although we assume CREW, it is necessary to distribute this information because a program running on a range processor is not aware of the memory position to find the information for it depends on which volume processor got assigned the key range.
The distribution is possible in $\O{\log \mins{B,P}}$ parallel I/Os where each block of a list is written into the inbox of the first range processor concerned by this block, and from there spread in an binary manner to the other range processors (cf. Figure~\ref{fig:scatter}).
In this, each processor can be informed about both the beginning and the end of its assigned area.
Note that it can never happen that a range processor receives information from multiple volume processors because for a key range divided to multiple volume processors only the first can dispose the tuples.
\fi

\begin{figure}[ht]
	\begin{center}
		\begin{tikzpicture}[scale=.6]
			\draw[fill=lightgray] (-.2,0) rectangle (1.2,1);
			\node at (.5,.5) {\scriptsize $P_i$};

			\draw[fill=lightgray] (1.8,0) rectangle (3.2,1);
			\node at (2.5,.5) {\scriptsize $P_{i+1}$};

			\draw[fill=lightgray] (3.8,0) rectangle (5.2,1);
			\node at (4.5,.5) {\scriptsize $P_{i+2}$};
			
			\node at (6,.5) {\scriptsize $\dots$};

			\draw[fill=lightgray] (6.8,0) rectangle (8.2,1);
			\node at (7.5,.5) {\scriptsize $P_{i+\frac{B}{2}}$};

			\node at (9,.5) {\scriptsize $\dots$};

			\draw[fill=lightgray] (9.8,0) rectangle (11.2,1);
			\node at (10.5,.5) {\scriptsize $P_{i+B}$};
			
			\draw[fill=gray] (-1.6,2.1) rectangle (.5,2.4);
			\draw (.2,2.1) -- (.2,2.4)
				  (-.1,2.1) -- (-.1,2.4)
				  (-.4,2.1) -- (-.4,2.4)
				  (-.7,2.1) -- (-.7,2.4)
				  (-1,2.1) -- (-1,2.4)
				  (-1.3,2.1) -- (-1.3,2.4);
			
			\draw[->] (-.4,2) -- (.4,1.1);
			 \draw[dashed, ->] (.6,1.1) .. controls (1,2.4) and (2,2.4) .. (2.4,1.1);
			  \draw[dotted, ->] (2.55,1.1) .. controls (3,2) and (4,2) .. (4.5,1.1);
			  \draw[dotted, ->] (2.55,1.1) .. controls (3,2) and (5.5,2) .. (6,1.1);
			 \draw[dashed, ->] (.6,1.1) .. controls (.2,2.4) and (7.5,2.4) .. (7.4,1.1);
			  \draw[dotted, ->] (7.55,1.1) .. controls (8,2) and (8.3,2) .. (8.6,1.1);
			  \draw[dotted, ->] (7.55,1.1) .. controls (8,2) and (9,2) .. (9.5,1.1);
		\end{tikzpicture}
	\end{center}	
	\caption{Scattering the list of key ranges.}
	\label{fig:scatter}
\end{figure}
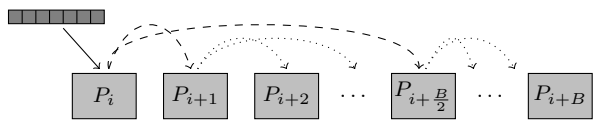

\if@fullpaper
\paragraph{Contraction}
Given an input sequence of $n$ elements written in $m$ blocks on disk, where some of the memory cells are empty, it is possible to contract the sequence in $\O{m/P + \log P}$ I/Os such that empty cells are removed and the sequence is written in the same ordering as before, but with elements stored contiguously.
To this end, the input blocks are assigned equally to the $P$ processors.
Each processor gets a contiguous peace of up to $\ceil{m/P}$ blocks assigned to it, and processors are assigned in ascending order.
Within one scan, each processor can determine the number of non-empty cells contained in its assigned area.
With this information, and the given ordering of processors, using prefix sum computation, in $\O{\log P}$ I/Os, it is known to each processor where its elements shall be placed in a contiguous output.
For each block of the contiguous output, the processor with the first, and the processor with the last element assigned to it are in knowledge of the fact, given the prefix sum results.
Since processors are assigned in ascending order to the pieces, a gather operation can be used to create blocks with elements assigned to several processors.
Note that each processor only needs to participate in at most two gather operations.
The gather process can be achieved by an output of each first and last processor of a block, in that both write their indices into a designated table.
Then, each processor can read this information and send its elements to the inbox of the responsible processor in the gather process.
\fi

\subsection*{Algorithms}
For a clearer understanding of the shuffle step, we use the analogy of a sparse matrix.
Let $N_M$ be the number of distinct input keys, and $N_R$ be the number of distinct intermediate keys (i.e., independent reduce runs).
Each pair $\langle i , x_{ij} \rangle$ emitted by map operation $j$ can be considered a triple $(i,j,x_{ij})$. 
Using this notation, one can think of a sparse $N_R \times N_M$ matrix with non-zero entries $x_{ij}$.
This matrix is given in some layout determined by the map function and has to be either reordered into a row-wise ordering, or a layout where rows can be reduced easily.
In the following, we consider input keys as column indices and intermediate keys as row indices.
The total number of intermediate pairs / non-zero elements is denoted $H$.
Additionally, we have $w$, the number of elements emitted by a reduce function, and $v$, the size of the input to a map function, $v,w \leq \mins{M-B, \ceil{H/P}}$ as argued in the introduction.
An overview of the algorithmic complexities is given in Table~\ref{tab:overview}.
For space restrictions the term $\log P$ is omitted in Table~\ref{tab:overview}.
We use $\lgb_b x := \maxs{\log_b x, 1}$.
The complexities given in Table~\ref{tab:overview} only differ from the descriptions in that we distinguish the special case $R = 1$, and make use of the observation $\O{\log_d (x/d)} = \O{\log_d x}$.
For all our algorithms we assume $H/P \geq B$, i.e., there are less processors than blocks in the input such that each processor can get a complete block assigned to it.

\begin{table}[h]
	\center
	{\scriptsize
	\renewcommand*\arraystretch{1.9}
	\begin{tabular}{|l||c|c|}
		\hline
		 & Non-parallel reduce & Parallel reduce \\		
		\hline
		\hline
		Unordered map & $\O{\frac{H}{PB} \lgb_d N_R}$ & $\O{\frac{H}{PB} \lgb_d \frac{N_R w}{B}}$  \\
		\hline
		Sorted map & $\O{\frac{H}{PB} \lgb_d \mins{\frac{N_M N_R B}{H}, N_R, N_M}}$ & $\O{\frac{H}{PB} \lgb_d \mins{\frac{N_M N_R w}{H}, \frac{N_R w}{B}}}$ \\ 
		\hline
		Parallel map & $\O{\frac{H}{PB} \lgb_d \mins{\frac{N_M N_R v}{H}, \frac{N_M v}{B}}}$ & $\O{\frac{H}{PB} \lgb_d \frac{N_M N_R vw}{BH}}$ \\
		\hline	
		\hline
		Direct shuffling & \multicolumn{2}{c|}{$\O{H/P}$ \ \ \scriptsize (non-uniform)}  \\
		Complete merge & \multicolumn{2}{c|}{$\O{\frac{H}{PB} \lgb_d \frac{H}{B}}$} \\
		\hline			
	\end{tabular}}
	\caption{Overview of the algorithmic complexities with $d = \mins{M/B, H/(PB)}$.}
	\label{tab:overview}
\end{table}

\subsubsection{Direct shuffling}
Obviously, the shuffle step can be completed by accessing each element once and writing it to its destination.
For a non-uniform algorithm, that is, with knowledge of the non-zero positions, $\O{H/P}$ parallel I/Os are sufficient.
To this end, the output can be partitioned into $H/P$ consecutive parts.
Since we assume $H/P \geq B$, collisions when writing can be avoided.
In contrast, because we consider CREW reading the elements in order to write them to their destination is possible concurrently.
We restrict ourselves in this case to a non-uniform algorithm to match the lower bounds in Section~\ref{sec:lower_bounds}.
This shows that our lower bounds are asymptotically tight.
Such a direct shuffle approach can be optimal.
However, for other cases that are closer to real world parameter settings a more evolved approach is presented in the following.

\subsubsection{Map-dependent shuffle part}
In this part, we describe for different types of map functions how to prepare intermediate pairs to be reduced in a next step.
To this end, during the first step $R$ meta-runs of non-zero entries from ranges of different columns will be formed. 
Afterwards, these meta-runs are further processed to obtain the final result.
The meta-runs shall be internally ordered row-wise (aka \emph{row major layout}).
If intermediate pairs have to be written in sorted order before the reduce operation can be applied, we set $R = \ceil{\frac{H}{N_R B}}$.
Otherwise, if the reduce function is associative, it will suffice to set $R = \ceil{\frac{H}{N_R \maxs{w,B}}}$. 

\paragraph{Non-parallel map, unordered intermediate pairs}
We first consider the most general (standard) case of MapReduce where we only assume that intermediate pairs from different map execution are written in external memory one after another.
The elements are ordered by column but within a column no ordering is given.
We refer to this as \emph{mixed column layout}.
This given ordering can only be exploited algorithmically for non-parallel reduce functions where a row major layout has to be constructed.

We apply the parallel merge sort by Arge et al. \cite{pem} to sort elements by row index.
This algorithm divides data evenly upon the $P$ processors and starts with an optimal (single processor) external memory algorithm such as the merge sort in~\cite{av88} to create $P$ presorted runs of approximately even size.
Then, these runs are merged in a parallel way with a merging degree $\ceil{d}$ for $d = \maxs{2,\mins{H/(PB), \sqrt{H/P}, M/B}}$. 
In contrast to~\cite{pem}, we slightly changed this merging degree in that we added the term $H/(PB)$ to the minimum.
This widens the parameter range given in~\cite{pem} without changing the complexity within the original range and guarantees matching lower bounds. 
Note that $\log H/(PB) \leq 2 \log \sqrt{H/P}$ such that for asymptotic considerations $d = \maxs{2, H/(PB), M/B}$ is sufficient.
Instead of a full merge sort, we stop the merging process when the number of runs is less than $R$.
Note that if $P \leq R$, we actually skip the parallel merging and perform only a local merge sort on each processor.
In either case, we get a parallel I/O-complexity of $\O{\frac{H}{PB} \lgb_d \frac{H}{BR} + \log P}$.

\paragraph{Non-parallel map, sorted intermediate pairs}
Here, we assume that within a column, elements are additionally ordered by row index, i.e. intermediate pairs are emitted sorted by their key.
This corresponds to the \emph{column major layout}.
In the following, we assume $H/N_R \geq B$. 
Otherwise, the previous algorithm is applied, or simply columns are merged together as described in a later paragraph. 

Since columns are ordered internally, each column can serve as a presorted run.
\if@fullpaper
Thus, we assign elements using the range-bounded load-balancing method, with column indices serving as keys.
With this assignment we run the parallel merge sort, but with the following modifications.
Again, we stop the merging processes as soon as less than $R$ runs remain.
Instead of the general external memory sorting algorithm that is used locally on each processor, we use the $M/B$-way merge sort to merge all the up to $N_M/P$ columns that are assigned to one processor.
For $P \leq R$ this local merge sort finishes the task by reducing the $N_M$ columns into $R$ runs.
If $P > R$ the local merging creates $P$ runs out of the $N_M$ columns.
Then, in the parallel merge phase, the $P$ runs are reduced to $R$ runs.
The total I/O-complexity of this algorithm is $\O{\frac{H}{PB} \lgb_d \frac{N_M}{R} + \log P}$.
\else
Starting the merge process with $N_M$ presorted runs and aiming to have a final number of $R$ meta-runs leads to an I/O-complexity of $\O{\frac{H}{PB} \lgb_d \frac{N_M}{R} + \log P}$.
\fi

\paragraph{Parallel map, sorted intermediate pairs}
In the following, we describe the case with the best possible I/O-complexity for the shuffle step when no further restrictions on the distribution of intermediate keys are made.
This is the only case where we actually consider the execution of the map function itself.
Note that in terms of I/O-complexity, an algorithm can emit intermediate pairs from a predefined key range only.
This is possible since intermediate pairs are generated in internal memory, and can be removed immediately without inducing an I/O, while pairs within the range of interest are kept.
In a model with considerations of the computational cost, it would be more appropriate to consider a map function which can be parallelised to emit pairs in a predefined intermediate key range.

We first describe the layout of intermediate pairs in external memory that shall be produced.
Let $m = \mins{M-B, \ceil{H/P}}$.
This time, intermediate pairs are not simply ordered by column primarily as before, but data is split into ranges of $m/v$ columns and within each such meta-column elements are ordered row-wise.
When creating this layout, each processor can keep its memory filled with the $m$ input elements required for each meta-column while writing the intermediate results.
%
\if@fullpaper
This layout can be obtained by first determining the volume of each meta-column in parallel by requesting intermediate results without writing them to disk.
Using parallel prefix sum computation, it can be determined which processor will be assigned to which volume part of a meta-column in order to realise an equal load balancing.
This step is possible in $\O{\log P}$ I/Os.
After the assignment of processors to meta-columns, each processor reads and keeps input pairs for a whole meta-column in internal memory.
Using the map function, intermediate pairs are requested and extracted, and then written to external memory in row-wise order (within a meta-column).
If a processor is assigned to multiple meta-columns, it processes each meta-column one after another.
Since $H/P \geq B$, multiple processors never have to write to the same block at the same time.

Because we already formed row-wise sorted meta-columns of $m/v$ columns, the number of merge iterations to generate $R$ row-wise sorted meta-runs is reduced.
If $N_M/m \leq R$, nothing needs to be done because the number of meta-columns is already less than the desired number of meta-runs.
Otherwise, we use the parallel merge sort.
Similar to the description for sorted intermediate pairs, the elements are assigned to processors in a balanced way, range-bounded now by meta-column index.
As a single processor sorting algorithm, the meta-columns are merged by merge-sort (if less than $P$ meta-columns exist).
The local or the preceding parallel merging is again stopped when at most $R$ runs remain.
\fi
Reducing the $N_Mv/m$ meta-columns into $R$ meta-runs induces an I/O-complexity of 
 $\O{\frac{H}{PB} \lgb_{M/B} \frac{N_Mv}{\mins{M, H/P} R} + \log P}$.

\subsubsection{Reduce-dependent shuffle part}
\emph{Non-parallel reduce function} For the general case of non-parallel reduce functions, intermediate keys of the same key are to be provided consecutively to the reduce worker, i.e., a row major layout has to be created.
This can be obtained in a direct manner as follows.
We describe the current layout in tiles, where one tile consists of the elements in one row within a meta-column.
The macroscopic ordering of these tiles is currently a column major layout.
To obtain the desired layout, tiles only need to be rearranged into a row major layout.

Observe that there are $\O{\frac{H}{N_R B}}$ meta-runs with $N_R$ rows each such that there are at most $\O{\frac{H}{B}}$ non-empty tiles.
\if@fullpaper
Each tile can consist of at most two blocks that contain elements from another tile (and need to be accessed separately).
However, these are still $\O{\frac{H}{PB}}$ parallel I/Os to access these blocks.
The remaining $\O{\frac{H}{B}}$ blocks that belong entirely to a tile contribute another $\O{\frac{H}{PB}}$ I/Os.

To rearrange tiles, we assign elements to processors balanced by volume and range-bounded by the tile index (ordered by meta-runs first, and by row within a meta-run).
In order to write the output in parallel, the destined positions of each of its assigned elements has to be known to the processors.
To this end, each processor scans its assigned elements, and for each beginning of a new tile, the memory position is output into a table $S$ (consisting of $\O{H/B}$ entries, one blocks each).
Afterwards, using this table, the size of each tile is determined and written to a new table $D$.
With table $D$ a prefix sum computation is started in row major layout such that $D$ now contains the relative output destination of each row within each meta-run.
In a CRCW model, with the same assignment of elements as before, tiles can now be written to their destination to form the output using table $D$.

For CREW, when first creating $D$, one can ceil the tile sizes to full blocks.
The resulting layout will obviously contain blocks that are not entirely filled, but contain empty memory cells.
However, using the contraction described above, one can extract these empty cells.

\else
Hence, the number of I/Os to copy tiles independently in parallel is $\O{\frac{H}{PB}}$.
In order to write the transposed matrix directly, the destination for each tile has to be determined.
This can be done using a prefix sum computation, requiring $\O{\log P}$ I/Os.
\fi
The whole step to finalise the shuffle step has I/O-complexity $\O{\frac{H}{PB} + \log P}$.

\paragraph{Parallel (associative) reduce function}
Assuming a parallelisable reduce, each processor shall perform multiple reduce functions simultaneously on a subset of elements with intermediate key in a certain range.
In a final step, the results of these partial reduce executions are then collected and reduced to the final result.

To this end, the range of intermediate keys is partitioned into $\ceil{N_RPw/H}$ ranges of up to $\ceil{H/(Pw)}$ keys.
\if@fullpaper
Using the range-bounded load-balancing algorithm, elements (still ordered in meta-runs) are assigned to processors such that each processor gets elements from at most two pieces of row indices.
This can be achieved by using the tuple $(\text{\it meta-run index}, \text{\it row index})$ as key.
\else
Then, elements are assigned to processors such that each processor gets elements from at most two ranges of row indices.
\fi
If a processor got assigned elements that belong to the same reduce function, elements can be reduced immediately by the processor.
Afterwards, for each key range, elements can be gathered to form the final result of the reduce function.
This is possible with $\O{\frac{H}{PB} + \log P}$ I/Os.

\subsubsection{Complete sorting / merging}
For some choices of parameters, especially for small instances, it can be optimal to simply apply a sorting algorithm to shuffle elements row-wise.
Using the parallel merge sort, this has I/O-complexity $\O{\frac{H}{PB} \lgb_d \frac{H}{B} + \log P}$.
Furthermore, if the matrix is given in column major layout, the $N_M$ already sorted columns can simply be merged to order elements row-wise.
This results in an I/O-complexity of $\O{\frac{H}{PB} \lgb_d N_M + \log P}$.

\section{Lower bounds for the shuffle step}
\label{sec:lower_bounds}
A simple task in MapReduce is creating the product of a sparse matrix~$A$ with a vector.
Assuming that the matrix entries are implicitly given by the map function, the task can be accomplished within one round. 
In this, map function $j$ is supplied with input vector element $x_j$ and emits $\langle i, x_j a_{ij} \rangle$.
The reduce function simply sums up values of the same intermediate key.
Hence, a lower bound for matrix vector multiplication immediately implies a lower bound for the shuffle step.
Since reduce can be an arbitrary function, we restrict ourselves to matrix multiplication in a semiring, where the existence of inverse elements is not guaranteed.


A lower bound for sparse $N \times N$ matrices in the I/O-model was presented in \cite{TCS10} and extended to non-square situations in \cite{mfcs10}.
These bounds are based on a counting argument, comparing the number of possible programs for the task with $\ell$ I/Os to the number of distinct matrices.
To this end, the maximal number of different configurations (content) of external and internal memory after one I/O is examined.
In this section, we explain the main differences to these proofs, the complete proofs of our results can be found in Appendix~\ref{app:lower_bounds}.


Since we have multiple processors and assume a CREW environment, we have to dissociate from the perspective of moving blocks.
Instead, we assume that a block in external memory which does not belong to the final output disintegrates magically immediately after it is read for the last time in this form.

For a parallel I/O, the number of preceding configurations is simply raised to the power of $P$.
However, in contrast to the single processor EM model, we have to consider the case $H/P \leq M$, i.e., not all processors can have their internal memory filled entirely.
Instead, we consider the current number of elements $X_{i,l}$ of processor $i$ before the $l$th parallel I/O.
The number of distinct configuration of a program after $\ell$ I/Os is then bounded by
\[ \prod_{l=1}^\ell 3^P \prod_{i=1}^P {X_{i,l} + B \choose B} 2^B 4P\ell \text{\,.}\]

Furthermore, we consider multiple input and multiple output vectors which leads to a combined matrix vector product.
In this, any intermediate pair -- in classical matrix vector multiplication an elementary product of a vector element with a non-zero entry -- can now be the result of a linear combination of the $v$ elements in the corresponding dimension of the input vectors, and any output element can be a linear combination of intermediate pairs with corresponding intermediate key.
We use a simplified version and consider the variant where each intermediate pair is simply a copy of one of the $v$ input elements, and it is required for the computation of precisely one output element.
Hence, for each of the $H$ non-zero entries in our matrix, there is not only a choice of its position but also the choice from which of the $v$ input elements it stems from, and which of the $w$ output elements will be its destination.
This results in a total number of $\binom{N_M N_R}{H} v^H w^H$ different tasks for fixed parameters $N_M$, $N_R$, $H$, $v$ and $w$.

Applying these modifications yields the following results.

\begin{theorem}
	\label{thm:SpMxV_lb}
	Given parameters $B$, $M \geq 3B$ and $P \leq \frac{H}{B}$.
	Creating the combined matrix vector product for a sparse $N_R \times N_M$ matrix with $H$ non-zero entries for $H/N_R \leq N_M^{1-\epsilon}$ and $H/N_M \leq N_R^{1-\epsilon}$ for $\epsilon > 0$ from $v \leq H/N_M$ input vectors to $w \leq H/N_R$ output vectors has (parallel) I/O-complexity
	\begin{itemize}
		\item $ \W{\mins{\frac{H}{P}, \frac{H}{PB} \log_d \frac{N_R w}{B}}} $ if the matrix is in mixed column layout
		\item $ \W{\mins{\frac{H}{P}, \frac{H}{PB} \log_d \mins{\frac{N_M N_R w}{H}, \frac{N_Rw}{B}}}} $ if given in column major layout
		\item and $ \W{\mins{\frac{H}{P}, \frac{H}{PB} \log_d \frac{N_M N_R vw}{H \mins{M, H/P}}}} $ for the best-case layout with $H/N_R \leq \sqrt[6]{N_M}$ and $H/N_M \leq \sqrt[6]{N_R}$ 
	\end{itemize}
	where $d = \mins{M/B, H/(PB)}$.
\end{theorem}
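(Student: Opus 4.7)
The plan is to follow the information-theoretic counting argument sketched just before the theorem: a shuffle-step algorithm that produces row-major data can compute the combined matrix-vector product in a single MapReduce round, so any lower bound on the latter propagates. I will count, for a fixed initial layout, the number of distinct tasks a single program must distinguish, and compare this to the number of program traces of length $\ell$ parallel I/Os; whenever the task count exceeds the program count, no such program can exist, yielding the desired bound on $\ell$.

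The configuration count is the formula already quoted in the excerpt, $\prod_{l=1}^{\ell} 3^{P} \prod_{i=1}^{P} \binom{X_{i,l}+B}{B} 2^{B} \cdot 4P\ell$. Taking logarithms and applying $\binom{X+B}{B} \leq (e(X+B)/B)^B$, the main work is to bound $\sum_i \log\binom{X_{i,l}+B}{B}$ per I/O. By concavity of $x \mapsto B\log(1+x/B)$ together with the constraints $X_{i,l} \leq M-B$ and the global cap $\sum_i X_{i,l} \leq H$, I obtain a per-I/O contribution of $O(PB \log d)$ with $d = \min(M/B, H/(PB))$; the two regimes inside $d$ correspond to whether the per-processor cap $M$ or the global load is the active constraint. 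The remaining per-I/O terms $P\log 3$, $B\log 2$, and $\log(4P\ell)$ contribute $O(PB + \log(P\ell))$, which the sparsity assumptions $H/N_R \leq N_M^{1-\epsilon}$ and $H/N_M \leq N_R^{1-\epsilon}$ force to be absorbed into the dominant $\log d$ factor.

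For the task count I instantiate the three layouts separately. In the mixed column layout the column of each non-zero is fixed by its memory position but the row is free and each pair carries a choice of target output vector; quotienting by within-block permutations (intermediate pairs are only required in some order inside a block) yields $\log(\text{tasks}) = \Theta(H \log(N_R w/B))$. The column major layout additionally forces rows within a column to appear in sorted order, cutting this to $\Theta(H \log \min(N_M N_R w/H, N_R w/B))$, the minimum reflecting whether columns are denser or sparser than a block. In the best-case layout the algorithm chooses the initial placement, so the adversary may additionally pick which of $v$ input vectors each non-zero reads from; this produces $\Theta(H \log(N_M N_R vw / (H \min(M, H/P))))$, and the extra $\sqrt[6]{N_M}$, $\sqrt[6]{N_R}$ conditions are precisely what keeps this dominant term from being dwarfed by the $\min(M, H/P)$ denominator. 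Dividing the task log by the per-I/O bit bound yields the claimed $\frac{H}{PB}\log_d(\cdot)$ lower bound, and the outer $\min(H/P, \cdot)$ covers the regime where the log argument degenerates so that only the trivial cost of touching every non-zero once remains.

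The main obstacle is the tight analysis of $\prod_i \binom{X_{i,l}+B}{B}$ in the parallel setting when $H/P \leq M$: the naive per-processor bound $\binom{M+B}{B}^P$ is far too loose, and one must exploit that the total number of elements simultaneously held across all processor memories is bounded by $H$, so concavity forces the effective per-processor workload toward the average $H/P$ rather than the cap $M$. This is exactly why $d$ must be a $\min$ rather than $M/B$ alone, and handling both regimes uniformly — while correctly propagating the $\epsilon$-sparsity assumptions that kill the lower-order terms — is what makes the bound match the upper bounds exhibited in Section~\ref{sec:algos}.
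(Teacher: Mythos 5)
Your overall strategy for the first two bullets is the paper's: compare the configuration-counting product $\prod_{l}3^P\prod_i\binom{X_{i,l}+B}{B}2^B\cdot 4P\ell$ against the number of abstract matrix conformations (with the $\tau$-type correction for within-block ambiguity), maximise the $\binom{X_{i,l}+B}{B}$ terms at $X_{i,l}=\min\{M,H/P\}$ to get the base $d$, and handle the degenerate regime via the sparsity assumptions to fall back on $\Omega(H/P)$. One caveat even there: for the matrix--vector product the \emph{final} configuration is the same for every conformation while the \emph{initial} one varies, so the count must be run backwards in time from the unique output (with the CREW ``disintegration'' convention and the normalisations that forbid copying in a semiring); your phrasing ``count the distinct tasks a single program must distinguish from a fixed initial layout'' suggests a forward count, which does not by itself produce a contradiction for the first two bullets. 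This is repairable, but it is part of the argument, not a detail.

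The genuine gap is the third bullet (best-case layout). You place the factor $\min\{M,H/P\}$ inside the \emph{task} count, but the number of tasks is $\binom{N_M N_R}{H}v^Hw^H$ and cannot depend on machine parameters $M$ and $P$; no choice of layout or of the $\sqrt[6]{\cdot}$ conditions makes it do so. Moreover, when the algorithm may choose the layout, counting either side alone fails: the row-sum (output) trace gives nothing because a row major layout makes reduction trivial, and the input-vector (copy-task) trace alone only yields the bound of Lemma~\ref{lem:SpMxCreate}, without $w$ and without the $\min\{M,H/P\}$ denominator. The paper's proof must therefore run \emph{both} traces simultaneously --- input vector elements forward from the unique initial configuration, row elements backward from the unique final configuration --- and, in addition, count the possible interaction events: after normalising so that each elementary product $a_{ij}x^{(k_{ij})}_j$ is formed immediately after an input, there are at most $BX_{i,l}$ candidate multiplication sites per processor per I/O, hence $Y\le\ell B\min\{PM,H\}$ in total, contributing an extra factor $\binom{Y}{H}$ on the program side; this is where $\min\{M,H/P\}$ enters, and one then needs the auxiliary inequality (Lemma~A.2 of~\cite{TCS10}) to eliminate the $\ell$ that this places inside the logarithm. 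None of this machinery appears in your proposal, so as written the best-case bound does not follow from your argument.
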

These lower bounds already match the algorithmic complexities for parallel reduce in Section~\ref{sec:algos}.
Moreover, a lower bound for creating a matrix in row major layout from $v$ vectors can be obtained in a very similar way (cf. parallel map \& non-parallel reduce).
This task corresponds to a time-inverse variant of creating the matrix vector product with a matrix in column major layout.

\begin{lemma}
	\label{lem:SpMxCreate}
	Given parameters $B$, $M \geq 3B$ and $P \leq \frac{H}{B}$.
	Creating a sparse $N_R \times N_M$ matrix with $H$ non-zero entries in row major layout from $v$ vectors $x^{(1)}, \dots, x^{(v)}$ such that for all non-zero entries holds $a_{ij} = x_j^{(k)}$ for some $k$  has (parallel) I/O-complexity
	\[ \W{\mins{\frac{H}{P}, \frac{H}{PB} \log_d \mins{\frac{N_M N_R v}{H}, \frac{N_M v}{B}}}} \]
	for $H/N_R \leq N_M^{1-\epsilon}$ and $H/N_M \leq N_R^{1-\epsilon}$, $\epsilon > 0$.
\end{lemma}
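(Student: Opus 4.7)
\medskip

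\noindent \textbf{Proof plan.} The statement is the time-reversed counterpart of the column-major case of Theorem~\ref{thm:SpMxV_lb}: a row-major layout of an $N_R \times N_M$ matrix coincides with a column-major layout of the transposed $N_M \times N_R$ matrix, so creating the matrix from $v$ vectors is, up to relabelling, the reverse of consuming a column-major matrix while producing $v$ vectors. Substituting $N_M \leftrightarrow N_R$ and $w \to v$ in that theorem's bound reproduces exactly the expression claimed in the lemma, so the plan is to rerun the counting argument of Theorem~\ref{thm:SpMxV_lb}, applied this time to the matrix-creation direction.

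The number of distinct task instances is $\binom{N_M N_R}{H} v^H$, reflecting the choice of the matrix's support and, for each non-zero entry, which of the $v$ input vectors supplies its value. The number of CREW PEM configurations reachable after $\ell$ parallel I/Os is bounded, as in Theorem~\ref{thm:SpMxV_lb}, by
\[ \prod_{l=1}^{\ell} 3^P \prod_{i=1}^{P} \binom{X_{i,l} + B}{B} 2^B \cdot 4P\ell, \]
where $X_{i,l}$ denotes the cache fill of processor $i$ immediately before the $l$-th parallel I/O. Since distinct tasks must lead to distinct final external-memory states, this bound must exceed the task count; taking logarithms and using $\sum_i X_{i,l} \leq PM$ together with the convexity of $\log\binom{\,\cdot\,}{B}$ yields $\ell P B \log(M/B) = \Omega\bigl(H \log (v N_M N_R / H)\bigr)$, from which the $\log_d(N_M N_R v / H)$ factor follows. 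The alternative $\log_d(N_M v / B)$ arises in the regime where $H$ is large compared to $N_R B$, by a refined accounting that bundles destinations within the same block-row of the output: the position factor $N_M N_R / H$ is replaced by $N_M / (H/B)$ while $v$ is promoted to $vB$ per non-zero. The outer $H/P$ term is the trivial write cost: $H$ non-zeros have to be output by $P$ processors.

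The parameter restrictions $H/N_R \leq N_M^{1-\epsilon}$ and $H/N_M \leq N_R^{1-\epsilon}$ enter exactly as in Theorem~\ref{thm:SpMxV_lb}, ensuring $\log \binom{N_M N_R}{H} = \Theta\bigl(H \log(N_M N_R / H)\bigr)$ so that the clean $H \log v + H \log(N_M N_R/H)$ estimate on the task side is not dominated by binomial-tail corrections. The main obstacle is making the duality with the SpMxV theorem watertight without invoking literal time-reversal, since concurrent writes are forbidden under CREW and a naive reverse-execution is therefore not a legal algorithm. The resolution is to apply the configuration-counting bound directly to the creation task rather than to a reversed execution: the product bound above is symmetric in reads and writes, each costing one "move-$B$-elements-across-the-cache-boundary" decision per processor per parallel I/O, so the same inequality delivers the lower bound without requiring an explicit reversed algorithm.
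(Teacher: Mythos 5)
Your overall skeleton matches the paper's: the paper proves this lemma via its ``copy task'' analysis (Theorem~\ref{thm:copy_task}), which, exactly as you propose, applies the forward-in-time configuration-counting argument directly to the creation task (initial configuration unique, final configuration varies with the conformation) rather than attempting a literal time-reversal of the column-major bound. The genuine gap is in the step ``distinct tasks must lead to distinct final external-memory states, [so the reachable-configuration bound] must exceed the task count.'' The product bound $\prod_{l}3^P\prod_i\binom{X_{i,l}+B}{B}2^B\cdot 4P\ell$ counts \emph{abstract} configurations, in which a block or a memory is only a set of (column index, origin vector) tuples with ordering and multiplicity discarded. Distinct tasks do yield distinct actual outputs, but not distinct abstract final configurations: when $B>H/N_R$ an output block of the row major layout contains entries from many rows, and the abstraction cannot tell which tuple belongs to which row, so a large number of conformations (roughly $(eBN_R/H)^H$) collapse onto the same abstract final configuration. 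The paper therefore divides the task count by this multiplicity $\tau_R$, and it is precisely this division that produces the second branch $\frac{N_Mv}{B}$ of the inner minimum. Your sentence intended to cover this (``refined accounting that bundles destinations within the same block-row'') gets both the regime and the arithmetic wrong: the branch $\log_d\frac{N_Mv}{B}$ is the binding one when $H\le N_RB$ (rows shorter than a block), not when $H$ is large compared to $N_RB$, and replacing the position factor $N_MN_R/H$ by $N_M/(H/B)$ while promoting $v$ to $vB$ gives $N_MvB^2/H$, not $N_Mv/B$. Without the correct multiplicity correction the inequality you write down is not available in that regime (and if you instead count actual configurations, the per-I/O branching must account for orderings inside a block, destroying the clean $B\log(M/B)$ term).

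A smaller but real issue is the outer $H/P$ term: it is not a ``trivial write cost,'' since writing $H$ elements with $P$ processors costs only $H/(PB)$ parallel I/Os. In the paper this branch comes out of the proof structure: one assumes $\ell<\frac14 H/P$ (otherwise done) so that the factor $4P\ell$ contributes only $O(\log H)$, and in the parameter range where $\log 3H$ dominates $B\log d$ the counting argument yields only $\ell=\Omega(\epsilon^2 H/P)$, using Lemma~\ref{lem:bounds} together with the hypotheses $H/N_R\le N_M^{1-\epsilon}$ and $H/N_M\le N_R^{1-\epsilon}$ (which are needed exactly here, not to control binomial tails as you suggest). Your proposal omits this case analysis, which is where the minimum with $H/P$ and the role of $\epsilon$ actually originate.
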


Theorem~\ref{thm:SpMxV_lb} and Lemma~\ref{lem:SpMxCreate} both hold not only in the worst-case, but for a fraction of the possible sparse matrices exponentially close to one.
Hence, for distributions over the matrix conformations (position of the non-zero entries), even if not uniform but somehow skewed, the lower bounds still hold on average if a constant fraction of the space of matrix conformations has constant probability.
Similar, the bounds hold on average for distributions where a constant fraction of the non-zero entries is drawn with constant probability from a constant fraction of the possible position.

\subsubsection{Transposing bound}
Another method to obtain a lower bound for the I/O-complexity is presented in~\cite{av88}.
They use a potential function to lower bound the complexity of dense matrix transposition.
This bound can also be extended to sparse matrix transposition in the PEM model for matrices given in column major layout.
Combining the following bound with Theorem~\ref{thm:SpMxV_lb} matches the algorithmic complexities given in Section~\ref{sec:algos} for non-parallel reduce.

\begin{theorem}
The transposition of a sparse $N_R \times N_M$ matrix with $H$ non-zero entries has worst-case parallel I/O-complexity
\[ \W{\frac{H}{PB} \log_d\mins{B, N_M, N_R, \frac{H}{B}}} \text{\,.}\]
\end{theorem}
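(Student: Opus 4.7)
The plan is to extend the potential-function argument of Aggarwal and Vitter~\cite{av88} for dense matrix transposition to both the sparse data and the parallel CREW setting. For every block $b$ (whether on disk or residing in some processor's internal memory) and every target row index $r$, let $t_{b,r}$ denote the number of non-zero entries currently stored in $b$ that are destined for row $r$ of the transposed output, and define
\[
\Phi := \sum_{b,\, r} t_{b,r}\log t_{b,r},
\]
with the convention $0\log 0 := 0$. First I would evaluate $\Phi$ at the two endpoints of the computation. In the column major input layout every block holds entries from a single column whose row indices are pairwise distinct, so $t_{b,r}\le 1$ everywhere and $\Phi_{\mathrm{init}} = 0$. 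In a row major output layout each block is as concentrated as the sparsity permits, and a case distinction over whether blocks are saturated by a single dense row, capped by the entries originating from one sparse input column, or limited because $H$ is too small to fill many full blocks at all, yields $\Phi_{\mathrm{final}} = \W{H\log \mins{B, N_M, N_R, H/B}}$.

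Next I would bound the change $\Delta\Phi$ caused by one parallel I/O. A single serial transfer moves at most $B$ entries into or out of a cache of size at most $M$; the standard Aggarwal-Vitter bookkeeping then bounds the associated potential change by $\O{B \log d}$ with $d = \mins{M/B, H/(PB)}$. Here the $\log(M/B)$ factor stems from merging the $B$ arriving entries into the internal memory viewed as $M/B$ slots, while capping the useful depth by $H/(PB)$ reflects that no processor can locally assemble longer same-row clusters than the overall volume divides. Because a parallel step consists of at most $P$ such serial moves (CREW concurrent reads being charged independently to each receiving cache), one obtains $\Delta\Phi = \O{PB \log d}$ per parallel I/O. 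Dividing the required total increase $\Phi_{\mathrm{final}}-\Phi_{\mathrm{init}}$ by the per-step bound then produces the claimed lower bound $\W{\frac{H}{PB}\log_d \mins{B, N_M, N_R, H/B}}$.

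The main obstacle I expect is twofold. First, in the CREW parallel setting a single external block may be read simultaneously by many processors, creating copies in all of the participating caches; the argument must verify that the resulting potential gain at one parallel step still stays within $\O{PB\log d}$ rather than accumulating an extra factor of $P$, which forces a careful accounting in which only the net growth over all destination caches, not the duplication itself, is charged. Second, recovering the full four-way minimum in $\Phi_{\mathrm{final}}$ requires treating each sparse regime separately: the $H/B$ term is binding when the non-zeros cannot even fill $H/B$ blocks of constant density, the $N_R$ term when most rows are strictly shorter than $B$ and hence the in-block concentration is capped row-wise, and the $N_M$ term is obtained by the symmetric bookkeeping in which column indices play the role of target groups, limiting the concentration that any schedule can create from an input whose non-zeros are spread across very few columns.
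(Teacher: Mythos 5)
Your overall route is the same as the paper's: extend the Aggarwal--Vitter togetherness potential to the PEM setting, bound the potential gain of one parallel I/O by $O(PB\log d)$, and compare initial and final potentials. But two of the key steps have genuine gaps. First, you define the potential with respect to target \emph{rows}, whereas the paper (following \cite{av88}) groups elements by target \emph{output block}. This is not cosmetic: with row-grouping the final potential is at most $H\log\min\{B,\ \max_r \ell_r\}$ (where $\ell_r$ is the length of row $r$), so in the regime where rows carry fewer than $B$ entries you cannot reach the claimed $\Omega\bigl(H\log\min\{B,N_M,N_R,H/B\}\bigr)$; moreover your claim that $\Phi_{\mathrm{init}}=0$ because ``every block holds entries from a single column'' is only true when $H/N_M\ge B$ --- when columns are shorter than $B$, a block of the column major layout spans several columns and may contain many entries with the same row index. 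The paper's proof treats exactly this split: for $H/N_M\ge B$ it takes doubly regular conformations (each input block meets each output block in at most one element, so $\Phi(0)=0$ and $\Phi(\ell)=H\log B$), and for $H/N_M<B$ it exhibits a banded conformation whose column/row major layout coincides with that of a dense $(H/N_M)\times N_M$ matrix and imports the dense initial-potential computation of \cite{av88}. Your sketch contains no substitute for this second construction, and without it the four-way minimum does not come out of the endpoint analysis.

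Second, your justification of the per-step bound $\Delta\Phi=O\bigl(PB\log\min\{M/B,H/(PB)\}\bigr)$ does not hold as stated: you cap the ``useful depth'' at $H/(PB)$ \emph{per processor}, but nothing prevents a single cache from holding up to $\min\{M,H\}$ elements, far more than $H/P$. The correct argument (the paper's lemma) is aggregate over the parallel step: outputs never increase the potential, and a parallel input round increases it by at most $PB\log 2e + X\log(Y/X)$ where $X\le PB$ counts the arriving elements and $Y\le\min\{PM,H\}$ is the \emph{total} cache content over all $P$ processors; a concavity step (phrased via non-negativity of the Kullback--Leibler divergence) then shows the even split is worst, giving $PB\log\frac{\min\{M,H/P\}}{B}$. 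This in turn relies on the normalisation that copies are useless for transposition, so the global element count stays $H$ even though CREW allows concurrent reads --- you flag this duplication issue as an obstacle but do not resolve it. So the skeleton is right, but both the endpoint computation and the parallel $\Delta\Phi$ lemma need the paper's arguments (output-block grouping plus the dense-band instance, and the aggregate KL/concavity bound) to go through.
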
 

\subsubsection{A bound for scatter / gather}
To cover all the algorithmic complexities, it remains to justify the scatter and gather tasks that are required for the exclusive write policy.
A lower bound for sorting related problems can be found in~\cite{pem}.
They show a lower bound of $\W{\log N/B} = \W{\log P}$ on the number of I/Os.


\section{Conclusion}
We determined the parallel worst-case I/O-complexity of the shuffle step for most meaningful parameter settings.
All our upper and lower bounds for the considered variants of map and reduce functions match up to constant factors.
Although worst-case complexities are considered, most of the lower bounds hold with probability exponentially close to one over uniformly drawn shuffle tasks.
We considered several types of map and reduce operations, depending on the ordering in which intermediate pairs are emitted and the ability to parallelise the map and reduce operations.
All our results hold especially for the case where internal memory of the processors is never exceeded but (blocked) communication is required.
This shows that the parallel external memory model reveals a different character than the external memory model in that communication can be described in the model even for the case the input fits into internal memories.

Our results show that for parameters that are comparable to real world settings, sorting in parallel is optimal for the shuffle step.
This is met by current implementations of the MapReduce framework where the shuffle step consists of several sorting steps, instead of directly sending each element to its destination.
In practise one can observe that a merge sort usually does not perform well, but rather a distribution sort does.
The partition step and the network communication in current implementations to realise the shuffle step can be seen as iterations of a distribution sort.
Still, our bounds suggest a slightly better performance when in knowledge of the block size.
If block and memory size are unknown to the algorithm, which corresponds to the so called cache-oblivious model, it is known that already permuting ($N_M=N_R=H$) cannot be performed optimally.
Sorting instead can be achieved optimally, but only if $M\geq B^2$ \cite{BF03}.
However, when assuming that the naïve algorithm with $\O{\frac{H}{P}}$ I/Os is not optimal, and $M \geq B^2$, all the considered variants have the same complexity and reduce to sorting all intermediate pairs in parallel.

\bibliographystyle{is-abbrv}
\bibliography{../Literature/sparse.bib}

\newpage

\begin{appendix}
	
	\section{Lower bounds}
	\label{app:lower_bounds}
	A lower bound for sparse $N \times N$ matrices in the I/O-model was presented in \cite{TCS10} and extended to non-square situations in \cite{mfcs10}.
	We follow closely their description but have to restate the proof for the PEM.
	Additionally, we consider a task where multiple input and output elements are associated with each non-zero entry.
	More specifically, we have $v$ input and $w$ output vectors.
	The task is to multiply each non-zero entry $a_{ij}$ with only one of the $v$ vector elements $x_j^{(1)}, \dots, x_j^{(v)}$.
	The assignment which vector is used for which non-zero entry is part of the input.
	Furthermore, each non-zero entry is associated with one of the $w$ output vectors.
	The $i$th coordinate of the $l$th output vector is the sum of all elementary products $a_{ij} x_j^{(k)}$ that were associated with vector $l$.
	Hence, each non-zero entry has, apart from its position in the matrix and its value, two further variables assigned to it, defining origin of the input vector element and destination of the elementary product.
	We refer to this task as the combined matrix vector product of a given matrix, a set of $v$ input vectors, the number of output vectors $w$ and a given assignment of non-zero entries to input / output vectors.
	
	Like in~\cite{TCS10}, the configuration at time $t$ refers to the concatenation of non-empty memory cells of external and internal memory between the $t$th I/O and the $t+1$th I/O.
	We will need the following technical lemma for the proofs of the lower bounds.
	\begin{lemma}
		\label{lem:bounds}
		Assume $\log 3 H \geq \frac72 B \log \mins{\frac{M}{B}, \frac{2H}{PB}}$, $H \geq \maxs{N_1,N_2} \geq 2$, $H/N_2 \leq N_1^{1 - \epsilon}$, then
		\begin{itemize}
			\item[$(i)$] $H \leq N_2^{1/\epsilon}$
			\item[$(ii)$] $N_2 \geq 2^8$ implies $B \leq \frac{1}{e \epsilon} N_2^{3/8}$.
			\item[$(iii)$] $N_2 \geq 2^8$ and $H/N_2 \leq N_1^{1/6}$ implies $\mins{\frac{M}{B}, \frac{2H}{BP}} \leq N_2^\frac{3}{7B}$.
		\end{itemize}
	\end{lemma}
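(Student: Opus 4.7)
}

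The three parts share a common setup: I will repeatedly use the hypothesis $H \geq \maxs{N_1, N_2}$ to collapse relations between $N_1$, $N_2$ and $H$, and use the structural assumption $\log 3H \geq \tfrac{7}{2} B \log \mins{M/B, 2H/(PB)}$ to trade the parameter $B$ against $\log H$.

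\emph{Part (i).} The plan is a one-liner. From $H \geq N_1$, the hypothesis $H/N_2 \leq N_1^{1-\epsilon}$ upgrades to $H/N_2 \leq H^{1-\epsilon}$, which rearranges directly to $H^\epsilon \leq N_2$ and hence $H \leq N_2^{1/\epsilon}$.

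\emph{Part (ii).} First I would note that under the standing assumptions $M \geq 3B$ and $P \leq H/B$, both $M/B \geq 3$ and $2H/(PB) \geq 2$, so the logarithm of the minimum is at least $1$. The structural hypothesis therefore collapses to $B \leq \tfrac{2}{7} \log 3H$. Substituting the bound from (i) gives
\[
B \;\leq\; \tfrac{2}{7}\bigl(\log 3 + \tfrac{1}{\epsilon}\log N_2\bigr) \;\leq\; \tfrac{2}{7\epsilon}\bigl(\epsilon\log 3 + \log N_2\bigr).
\]
At this point the work reduces to checking that for $N_2 \geq 2^8$ (so $\log N_2 \geq 8$) the linear-in-$\log N_2$ bound above is dominated by $\tfrac{1}{e\epsilon}\, N_2^{3/8}$. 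I would verify the base case $N_2 = 2^8$ directly (where both sides are constants of the same order) and then invoke the fact that $N_2^{3/8}$ grows exponentially in $\log N_2$, while the left side grows linearly, to extend the inequality to all larger $N_2$.

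\emph{Part (iii).} I would invert the structural hypothesis to
\[
\mins{M/B,\, 2H/(PB)} \;\leq\; (3H)^{2/(7B)},
\]
and then use the sharper constraint $H/N_2 \leq N_1^{1/6}$ together with $N_1 \leq H$ to deduce $H \leq N_2^{6/5}$. Substitution yields
\[
\mins{M/B,\, 2H/(PB)} \;\leq\; 3^{2/(7B)}\, N_2^{12/(35B)}.
\]
To close the gap to the claimed bound $N_2^{3/(7B)} = N_2^{15/(35B)}$, it is enough that $3^{2/(7B)} \leq N_2^{3/(35B)}$, i.e.\ $N_2^{3/5} \geq 9$, which holds comfortably for $N_2 \geq 2^8$.

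The only delicate point is the constant-chasing in parts (ii) and (iii); everything else is a direct manipulation of inequalities. I expect the verification of the base case $N_2 = 2^8$ in part (ii) to be where one has to be slightly careful, since the constants $\tfrac{2}{7}$ and $\tfrac{1}{e}$ are tight enough that a sloppy bound would miss by a small factor.
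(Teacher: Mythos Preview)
Your proposal is correct and follows essentially the same route as the paper: part~(i) is the same one-line manipulation, and parts~(ii) and~(iii) both proceed by collapsing the structural hypothesis to $B \leq \tfrac{2}{7}\log 3H$ (resp.\ $\min\{M/B,2H/(PB)\} \leq (3H)^{2/(7B)}$), feeding in the bound $H \leq N_2^{1/\epsilon}$ from~(i), and then doing constant-chasing. The only cosmetic difference is that the paper absorbs the factor~$3$ via $3H \leq H^{5/4}$ for $H \geq 2^8$, which in part~(ii) combines with $\tfrac{5}{4}\cdot\tfrac{2}{7} \leq \tfrac{1}{e}$ and the elementary $\log x \leq x^{3/8}$ to avoid your base-case-plus-growth verification entirely.
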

	\begin{proof}
		Combining $H/N_2 \leq N_1^{1 - \epsilon}$ with $H \geq N_1$ yields $N_1 \leq N_2 N_1^{1-\epsilon}$, i.e., $N_2 \leq N_1^{1/\epsilon}$.
		Substituting $N_1$ in $H \leq N_2 N_1^{1-\epsilon}$ results in $(i)$.

		For $(ii)$, we have $B \leq \frac27 \log 3 H \leq \frac1e \log H$ since $H \geq N_2 \geq 2^8$ such that we can use $3 \cdot 2^8 < 2^{10}$ and note that $\frac54 \cdot \frac27 \leq \frac1e$. Using $(i)$, we get $B \leq \frac1e \log N_2^{1/\epsilon} \leq \frac{1}{e \epsilon} \log N_2$.
		Finally, we simply use the additional observation $\log x \leq x^{3/8}$ for $x \geq 2^8$.

		The last results is obtained by rewriting the main assumption as $3 H \geq \mins{\frac{M}{B}, \frac{2H}{BP}}^{7B/2}$.
		Again, using $3H \leq H^{5/4}$ for $H \geq N_2 \geq 2^8$, we get $H \geq \mins{\frac{M}{B}, \frac{2H}{BP}}^{14B/5}$. 
		$H$ in term is bounded from above by $N_2^{6/5}$ such that we have $\mins{\frac{M}{B}, \frac{H}{BP}} \leq (N_2^{6/5})^{5/(14B)} \leq N_2^{3/(7 B)}$.
	\end{proof}

	\subsection*{Best-case to row major layout with multiple input pairs}
	\label{sec:lb_copy_task}
	To begin with, we consider a task that is related to the matrix vector product but seems somewhat simpler.
	In~\cite{TCS10}, a copy task is described, where a matrix in column major layout is created from a vector such that each non-zero element in row $i$ is a copy of the $i$th vector element.
	The lower bound for this task is quite similar to the bound for permuting in~\cite{av88}.
	In~\cite{TCS10}, the copy task is only used as a preliminary analyses to reduce the matrix vector product to.
	Here, we actually can use the task itself to state a lower bound.
	Observe that the copy task is equivalent to the creation of a sparse matrix in row major layout where each non-zero entry in column $j$ is a copy of the $j$th vector element.
	This corresponds to the special case of MapReduce where a map function simply copies its input value $H/N_M$ times with random intermediate indices which then have to be ordered by intermediate index.
	Thus, a lower bound for this task states a lower bound for the shuffle step with parallel map functions but non-parallel reduce.
	We extend this task further to $v$ multiple input vectors, such that in the created matrix each non-zero entry in column $j$ is a copy of of the $j$th vector element of one of the $v$ vector.
	In the following, we bound the number of I/Os required for a family of programs for the copy task such that every matrix conformation (position of the non-zero elements) can be created by a program.

	\paragraph{Normalisations}
	In the following, we make some normalisations to programs which will not increase the number of I/Os.
	Therefore, it suffices to consider normalised programs only.
	Computational operations are not required for the copy task and can hence be removed from the program with all their results.
	This can only reduce the number of I/Os.
	We further can remove any other element created during the program that does not run into the final matrix.
	Hence, input vector elements and any copies that do not run into the final result are removed throughout the whole program.
	In this, we also remove elements that are no long required in internal memory immediately after an I/O.
	Moreover, we eliminate copy operations within internal memory.
	Multiple copies of an input vector element in internal memory are useless, and copies can be generated by an output where elements reside in internal memory after the output.
	This step can again only reduce the number of elements in internal memory, and thus, the number of I/Os.

	\paragraph{Abstraction}
	We abstract like in~\cite{TCS10} from the actual configuration.
	Instead of the full information of an element, we consider only the set of \emph{(column index, origin vector index)}~tuples of the elements in a block or in internal memory.
	Hence, blocks and internal memory are considered subsets of $\{(1,1),\dots,(N_M,v)\}$ of size at most $B$ and $M$, respectively.
	This abstracts especially from the ordering of elements and the multiplicity of elements with the same column index.

	\paragraph{Description of programs}
	We consider the change of configurations over time.
	The initial abstract configuration is unique over all programs for fixed $N_M$.
	The final configuration in contrast depends on the conformation of the matrix that is generated.
	This number will be examined later on.

	For a given abstract configuration, we examine the number of possibly succeeding (abstract) configurations generated by different normalised programs.
	To this end, first assume that processor $i$ performs an input while all the other processors stay idle.
	After $\ell$ parallel I/Os, there can be at most $H/B + P \ell \leq 2 P \ell$ non-empty blocks in external memory.
	Hence, there are at most $2 P \ell$ blocks that can be read by the input.
	Afterwards, up to $B$ new elements are added to internal memory of processor $i$.
	We assume that unneeded elements are removed right away, which allows a choice of $2^B$ such that there are at most $2^B 2\ell$ succeeding configurations.
	Now consider the case processor $i$ performs an output.
	For $\ell$ being the maximal number of I/Os, there are less than $4 P \ell$ positions available relative to the non-empty blocks to perform the output to: $2 P (\ell - 1)$ non-empty blocks that can be overwritten and another $2 P (\ell - 1) + 1$ empty positions relative to the non-empty blocks.
	The content of the output block can consist of up to $B$ out of the $X_{i,l}$ elements in internal memory of processor $i$.
	Further unneeded elements can be removed after the output which constitutes another $2^B$ different possible configurations.

	Each of the $P$ processors can perform either an input, an output or be idle during the $l$th parallel I/O.
	Thus, there are up to $3^P \prod_{i=1}^P {X_{i,l} + B \choose B} \cdot 2^B \cdot 4 P \ell$ possible configurations succeeding a given configuration.
	A family of normalised programs with $\ell$ parallel I/Os can lead to
	\begin{equation}
		\label{eq:trace_forward}
		\prod_{l=1}^\ell 3^P \prod_{i=1}^P {X_{i,l} + B \choose B} 2^B \cdot 4 P \ell
	\end{equation}
	distinct abstract configurations.

	\paragraph{Different abstract matrix conformations}
	For a family of programs being able to produce all conformations, $\ell$ needs to be large enough such that (\ref{eq:trace_forward}) is at least as large as the number of abstract configurations representing all conformations.
	There are $\binom{N_M}{H/N_R}^{N_R}$ different conformations of $N_R \times N_M$ matrices with $H/N_R$ non-zero entries per row.
	Furthermore, each of the non-zero entries can stem from one of the $v$ input vectors.
	However, since we consider abstract configurations and ignore the ordering and multiplicity of elements within a block, the number of final abstract configurations is less.
	For an abstract configuration, it is not clear wether a tuple present in a block stems from one or multiple rows, neither from which of them.
	As described in~\cite{TCS10} the following cases have to be distinguished.
	If $H/N_R = B$ each block corresponds to exactly one row such that each abstract configuration describes only a single conformation.
	In case $H/N_R > B$, a block contains entries from at most two rows.
	Hence, a column index in the abstract description of a block can origin either from the first, second or both rows.
	For $H/N_R < B$, a block contains entries from $BN_R/H$ different rows.
	Then, the at most $B$ indices in the abstract description can be column indices of each of the $BN_R/H$ rows.
	Thus, there are ${B \choose H/N_R}$\todo{ohne 2?} blocks with the same abstract description.

	\begin{theorem}
		\label{thm:copy_task}
		Given block size $B$, internal memory $M \geq 3B$, number of mappers $N_M \geq 9^{1/\epsilon}$ and the number of processors $P \leq \frac{H}{B}$.
		Creating a sparse $N_R \times N_M$ matrix with $H$ non-zero entries in row major layout from $v \leq H/N_M$ vectors such that $a_{ij} = x_j^{(k)}$ for a $1 \leq k \leq v$ for each non-zero entry has (parallel) I/O-complexity
		\[ \ell \geq \mins{\frac{\epsilon^2}{5}\frac{H}{P}, \frac{H}{7PB} \log_{\mins{\frac{M}{B}, \frac{2H}{PB}}} \mins{\frac{N_M N_R v}{3H}, \frac{N_M v}{eB}}} \text{\,.} \]
	\end{theorem}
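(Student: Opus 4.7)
The plan is to mimic the counting argument sketched above for the copy task, comparing an upper bound on the number of abstract configurations reachable after $\ell$ parallel I/Os against a lower bound on the number of abstract configurations we must be able to reach in order to produce every valid $(\text{conformation}, \text{origin-assignment})$ pair. First I would restate the normalisations (strip computational operations, eliminate in-memory copies, drop elements that never reach the output) so that we are genuinely counting abstract traces of normalised programs, and confirm that (\ref{eq:trace_forward}) bounds the number of reachable configurations after $\ell$ parallel I/Os.

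Next I would lower bound the number of targets. There are $\binom{N_M}{H/N_R}^{N_R}$ choices for the non-zero positions and $v^H$ choices for which of the $v$ input vectors each non-zero copies. Dividing by the number of conformations collapsed by abstraction — which by the case analysis in the excerpt is $1$ if $H/N_R = B$, at most $3^H$ if $H/N_R > B$ (three possibilities per tuple: top row, bottom row, both), and at most $\binom{B}{H/N_R}^{H/B}$ if $H/N_R < B$ — one obtains a lower bound of the form $\bigl(\tfrac{N_M N_R v}{cH}\bigr)^H$ or $\bigl(\tfrac{N_Mv}{eB}\bigr)^H$ respectively after a Stirling-type estimate on the binomial, which is exactly the second argument of the inner minimum in the theorem. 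This explains the two alternatives under the inner $\min$.

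Now I would take logarithms of (\ref{eq:trace_forward}) and bound $\sum_{l,i} \log\binom{X_{i,l}+B}{B}$. Two regimes appear: either $X_{i,l}$ is small (close to $B$) so the binomial is $2^{O(B)}$, or it is close to $M$, in which case $\log\binom{X_{i,l}+B}{B} \leq B\log\bigl(\tfrac{eM}{B}\bigr)$. Since every element in any internal memory originates from an input and eventually must leave, $\sum_{l,i} X_{i,l}/(PB)$ is controlled by the total number of I/Os; more carefully, using $X_{i,l}\leq \min(M, B + (H/P))$ — because a processor touches at most $B$ elements per I/O and the load balancing argument from the excerpt caps each processor's accessed elements — the base of the logarithm collapses to $\min(M/B,2H/(PB))$. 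The leading term $2^B\cdot 3^P\cdot 4P\ell$ and the polynomial factor $(4P\ell)^\ell$ are absorbed using Lemma~\ref{lem:bounds}: clause $(ii)$ absorbs the $2^B$ into $N_M^{O(1)}$, clause $(i)$ absorbs the $(P\ell)^\ell$ and $3^P$ factors, and the assumption $M \geq 3B$ hides the small additive constants. Setting the two sides equal and solving for $\ell$ yields the stated bound, with the outer minimum against $\epsilon^2 H/(5P)$ arising from the regime where the logarithm is so small that the trivial ``touch each element once'' bound $\ell \geq H/(PB) \cdot B/P$ dominates.

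The main obstacle, as in \cite{TCS10,mfcs10}, is bookkeeping in the binomial estimate: the $X_{i,l}$ vary per processor and per step, so one must use concavity of $x \mapsto \log\binom{x+B}{B}$ together with a budget argument on $\sum X_{i,l}$ to pull out the uniform base $\min(M/B, 2H/(PB))$. The other delicate point is ensuring that absorbing the $2^B$, $3^P$ and $(4P\ell)^\ell$ factors only costs a constant factor in the exponent of $H$, which is exactly the role of Lemma~\ref{lem:bounds} and the hypothesis $H/N_R \leq N_M^{1-\epsilon}$ turned around into $H \leq N_M^{1/\epsilon}$. Once these absorptions are in place, the inequality $\ell \cdot PB \cdot \log(\text{base}) \geq H\log(\text{argument})/7$ drops out and rearranges to the claimed bound on $\ell$.
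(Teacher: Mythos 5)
Your proposal follows essentially the same route as the paper's proof: the same normalisation/abstraction argument bounding reachable configurations by (\ref{eq:trace_forward}), the same count of abstract targets $\binom{N_M}{H/N_R}^{N_R}v^H/\tau_R$ with the three-case collapse factor, the replacement of each $X_{i,l}$ by $\min\{M,H/P\}$ via concavity under the budget $\sum_i X_{i,l}\leq\min\{PM,H\}$, and the same two-case absorption of the lower-order factors through Lemma~\ref{lem:bounds} (after first disposing of the regime $\ell\geq\Omega(H/P)$), yielding both branches of the stated minimum. Your intermediate remark that a per-processor cap $X_{i,l}\leq\min\{M,B+H/P\}$ holds is not justified (no such individual cap exists, only the aggregate budget), and the attribution of exactly which factor each clause of Lemma~\ref{lem:bounds} absorbs is slightly off, but your concluding concavity-plus-budget formulation is precisely the paper's argument, so nothing essential is missing.
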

	\begin{proof}
		If a family of programs with $\ell$ I/Os is able to create all conformations of $N_R \times N_M$ matrices in row major layout with $H$ non-zero entries, then
		\begin{equation}
			\label{eq:row_lb}
			\prod_{l=1}^\ell 3^P \prod_{i=1}^P {X_{i,l} + B \choose B} 2^B \cdot 4 P \ell \geq {N_R \choose H/N_M}^{N_M} v^H / \tau_R
		\end{equation}
		with
		\begin{equation*}
			\tau_R \leq \begin{cases}
				3^H          & \text{if $B < H/N_R$}\\
				1            & \text{if $B = H/N_R$}\\
				(eBN_R/H)^H  & \text{if $B > H/N_R$}
			\end{cases}
		\end{equation*}
		has to hold.
		For $\ell \geq \frac{1}{5} \frac{H}{P}$, the claim is already proven, so we can assume in the following $\ell < \frac{1}{4} \frac{H}{P}$.
		Similar, if $N_M \leq 2^8$, then the logarithm in Theorem~\ref{thm:copy_task} is smaller than $7$ such that the theorem is proven with a scanning bound of $\frac{H}{PB}$ which is required for reading the input.
		Hence, also assume $N_M > 2^8$.
		Taking logarithms and estimating binomial coefficients in (\ref{eq:row_lb}) yields
		\[  \ell P (B+\log 3H) + \sum_{l=1}^\ell \sum_{i=1}^P B \log \frac{e(X_{i,l} + B)}{B} \geq H \log \frac{N_M N_R}{H} + H \log v - \log \tau_R \text{.} \]
		Observe that for all $l$ the term $\prod_{i=1}^P X_{i,l}$ with $\sum_{i=0}^P X_{i,l} \leq \mins{PM, H}$ is maximised when $X_{1,l} = \dots = X_{P,l} = \mins{M, H/P}$.
		Hence, we have
		\[  \ell P \left(B + \log 3H + B \log e \frac{\mins{M, H/P} + 1}{B} \right) \geq H \log \frac{N_M N_R v}{H} - \log \tau_R  \text{\,.}\]
		After substituting $\tau$ and isolating $\ell$, we obtain
		\[ \ell \geq \frac{H}{P} \frac{\log \mins{\frac{N_M N_R v}{3H}, \frac{N_M v}{eB}}}{\log 3H + B \log \left(2e (\mins{\frac{M}{B}, \frac{H}{PB}} + 1)\right)} \]
		and using the assumptions $M \geq 3B$ and $H/(PB) \geq 1$, we get
		\[ \ell \geq \frac{H}{P} \frac{\log \mins{\frac{N_M N_R v}{3H}, \frac{N_M v}{eB}}}{\log 3H + \frac72 B \log \mins{\frac{M}{B}, \frac{2H}{PB}} } \text{.}\]
		where we used that $2e(x+1) \leq (2x)^{7/2}$ for $x \geq 1$.
		\\
		\emph{Case 1:}
		For $\log 3H \leq \frac72  B \log \mins{\frac{M}{B}, \frac{2H}{PB}}$, we have
		\[ \ell \geq \frac{H}{7PB} \log_{\mins{\frac{M}{B}, \frac{2H}{PB}}} \mins{\frac{N_M N_R v}{3H}, \frac{N_M v}{eB}} \text{.}\]
		\\
		\emph{Case 2:}
		If $\log 3H \geq \frac72 B \log \mins{\frac{M}{B}, \frac{2H}{PB}} $, we can use $3H \leq H^{5/4}$ for $H \geq N_M \geq 2^8$, and with Lemma~\ref{lem:bounds}.i, we obtain
		\[ \ell \geq \frac{H}{P} \frac{\log \mins{\frac{N_M N_R v}{3H}, \frac{N_M v}{eB}}}{\frac52 \log N_M^{1/\epsilon}} \text{\,.}\]
		Using Lemma~\ref{lem:bounds}.ii with $N_1 = N_R$ and $N_2 = N_M$, and ignoring $v$, we have
		\[ \ell \geq \frac{H}{P} \frac{\log \mins{\frac13 N_M^\epsilon, \epsilon N_M^{5/8}}}{\frac52 \log N_M^{1/\epsilon}} \text{\,.} \]
		For $N_M \geq 3^{2/\epsilon}$ in term, we obtain
		\[ \ell \geq \frac{H}{P} \frac{\log \mins{N_M^{\epsilon/2}, \epsilon N_M^{5/8}}}{\frac52 \log N_M^{1/\epsilon}} \geq \frac{H}{P} \frac{\log N_M^{\epsilon/2}}{\frac52 \log N_M^{1/\epsilon}} \geq \frac{\epsilon^2}{5}\frac{H}{P}\]
		since $\epsilon N_M^{5/8} \geq N_M^{\epsilon/2}$ for $N_M \geq 1/\epsilon^{\frac{8}{5-4\epsilon}}$ and $9^{1/\epsilon} \geq 1/\epsilon^{\frac{8}{5-4\epsilon}}$ for all $\epsilon > 0$.
	\end{proof}

	\subsection*{Mixed column layout with multiple output pairs}
	\label{sec:lb_worstcase}
	Following~\cite{TCS10}, for this task, it suffices to consider the multiplication of a matrix with the all-ones-vector, i.e., the task of simply building row sums of the matrix.
	Similarly, we extend the task such that each elementary product is only used for the calculations of one of the $w$ output vectors.
	Hence, we consider $w$ independent tasks of building subsets of row sums.
	This task can be seen as a time-inverse variant of the copy task described in Section~\ref{sec:lb_copy_task}.
	Instead of spreading copies of input elements, the scattered matrix elements of the same row have to be collected and summed up to $w$ subset sums.
	Therefore, we can use a similar analysis as before, but consider the change of configurations backwards in time since there are multiple input forms depending on the conformation that can all create the same output.

	\paragraph{Normalisation}
	Again, we describe some normalisations to programs for this task that do not increase the required number of I/Os.
	First, observe that each non-zero entry can only appear once in the final result since we do not guarantee inverse elements.
	Hence, copies of matrix entries can be removed from the program.
	Second, we assume that non-zero entries from the same row are summed immediately when present in internal memory.
	This can only decrease the number of elements in internal memory at a time, and hence the number of I/Os.
	Finally, an element in internal memory shall be deleted immediately after an I/O if it is no longer used.
	This implies that an element that is output will be removed from internal memory immediately after its output since we argued before that copying is useless.
	Furthermore, when an input is performed but some of the elements are removed immediately after the input, one can think of loading only the elements that are actually required.

	\paragraph{Abstraction}
	We consider an abstract configuration similar to the section before.
	Here we consider the set tuples of \emph{row} indices and \emph{destinations} of the elements (or sums of elements).
	Note that sums can only consist of elements from the same row and the same destination, otherwise the created sum cannot run into the final result.
	Hence, internal memory and each block states a subset of $\{(1,1),\dots,(N_R,w)\}$ of size up to $M$ and $B$, respectively.

	\paragraph{Description of programs}
	The proof for this task considers the change of configurations backwards in time from the final configuration to the initial.
	Because concurrent reads are possible in our model, we assume that elements in a block in external memory, that do not belong to the final output disintegrate magically, when read for the last time.
	This holds especially for the case an output is performed to the $i$th block on disk.
	Since this will replace the elements present before, the former elements cannot be read any more and hence, disintegrated before.
	Thus, outputs are only performed to empty blocks.

	Consider the final configuration after $\ell$ I/Os.
	Since all blocks that do not belong to the output disintegrated, the final (abstract) configuration is unique for all programs that compute the matrix vector product for fixed $N_R$.
	In contrast, the initial configuration depends on the conformation of the matrix, the position of the non-zero elements.
	For a family of programs that create the matrix vector product for sparse $N_R \times N_M$ matrices, the changes of configurations can be consider as a tree rooted in the final configuration.
	Each leaf corresponds to a different matrix conformation and each layer of depth $i$ corresponds to the configurations at time $\ell - i$.

	%

	%
	We normalised our programs such that computational operations are only done immediately after an I/O.
	Hence, it suffices to consider input and output operations only.
	Given a certain configuration after $l$ I/Os, we now need to count the number of \emph{possible} preceding configurations.
	Note that since we normalised our programs to avoid copying matrix entries, at any point in time there can be no more than $H$ non-empty blocks on disk.

	For the $l$th parallel I/O, consider the case processor $i$ performs an input.
	The $l$th configuration can stem from several preceding configurations.
	To upper bound the number of preceding configurations, we assume that some elements of the input block disintegrated immediately after the input.
	This changes the configuration on disk, otherwise only the change of internal memory has to be considered.
	Again, there are at most $H/B + P \ell \leq 2 P \ell$ non-empty positions on disk to choose which block was input.
	Furthermore, there are at most $B$ row indices input.
	For $X_{i,l}$ being the number of distinct row indices in internal memory after the $l$th I/O, there are at most ${X_{i,l} + B \choose B}$ possibilities which row indices belong to the input.
	Each of which could however have been present before, or appear as a new index.
	This makes another $2^B$ possibilities.
	Altogether, there are up to ${X_{i,l} + B \choose B} 2^B 2P\ell$ preceding configurations if processor $i$ performs an input, while all other processors are idle.

	Now, consider the case that processor $i$ performs an output.
	This alters one block on disk that was empty before.
	Hence, there are up to $4 P \ell$ possible preceding configurations.
	In the preceding configuration, the elements of the output block are in internal memory.
	However, when choosing the block position where the output is performed, these elements are determined by the (known) configuration after the output.

	Each of the $P$ processors can perform either an input, an output or be idle during the $l$th I/O.
	Thus, there are up to $3^P \prod_{i=1}^P {X_{i,l} + B \choose B} 2^B 4 P \ell$ possible preceding configurations.
	A family of programs with $\ell$ parallel I/Os can hence create the matrix vector product from at most
	\begin{equation}
		\label{eq:trace}
		\prod_{l=1}^\ell 3^P \prod_{i=1}^P {X_{i,l} + B \choose B} 2^B 4 P \ell
	\end{equation}
	initial abstract configurations.
	This result is still independent from the layout of the matrix.

	\paragraph{Different abstract matrix conformations}
	To gain a lower bound on the I/O-complexity of the mixed column layout, we have to lower bound (\ref{eq:trace}) by the number of different matrix conformations in mixed column layout expressed in abstract blocks on disk.
	We consider matrices with exactly $H/N_M$ elements per column.
	Think of drawing the non-zero elements for each column one after another.
	For the number of non-zero entries per column $H/N_M \leq N_R/2$, there are at least $(N_R/2)$ possibilities to draw the position of a non-zero element.
	Furthermore, each element can be involved in the computation of one of the $w$ output vectors.
	In total, there are at least $(N_R/2)^H w^H$ different matrix conformations.
	However, abstracting to the view of \emph{(row index, destination)} tuples, the ordering of elements within a block gets hidden.
	Additionally, if a block contains elements from several columns, in its abstraction it is not clear from which column(s) a tuple may stem from.
	The number of different conformation that correspond to the same abstract conformation can be bounded from above by $B^H$ since each element can be one of the at most $B$ row indices of its block.

	\begin{theorem}
		\label{thm:mixed_column}
		Given block size $B$, internal memory $M \geq 3B$, number of reducers $N_R \geq 1/\epsilon^{8/3}$ and the number of processors $P \leq \frac{H}{B}$.
		Creating the combined matrix vector product for a sparse $N_R \times N_M$ matrix in mixed column layout with $H$ non-zero entries for $w \leq H/N_R$ output vectors, with $H/N_R \leq N_M^{1-\epsilon}$ and $H/N_M \leq N_R^{1-\epsilon}$ for $\epsilon > 0$ has (parallel) I/O-complexity
		\[ \ell \geq \mins{\frac{\epsilon}{10} \frac{H}{P}, \frac{H}{7 PB} \log_{\mins{\frac{M}{B}, \frac{2H}{PB}}} \frac{N_Rw}{2B} } \text{\,.} \] 
	\end{theorem}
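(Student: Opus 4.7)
The plan is to follow the same scheme as the proof of Theorem~\ref{thm:copy_task}, but now combine the program-trace bound already derived in the ``Description of programs'' paragraph with the conformation count from the ``Different abstract matrix conformations'' paragraph. Concretely, for a family of normalised programs that can compute the combined matrix vector product for every conformation of an $N_R \times N_M$ matrix in mixed column layout with $H$ nonzero entries and $w$ output assignments, the number of distinct initial abstract configurations reachable backwards from the unique final configuration, bounded by (\ref{eq:trace}), must be at least the number of distinct initial configurations representing all conformations. Using the lower bound $(N_R/2)^H w^H$ on the number of conformations (valid since $H/N_M \leq N_R/2$ follows from $H/N_M \leq N_R^{1-\epsilon}$ and $N_R \geq 1/\epsilon^{8/3} \geq 2$) and the abstraction factor at most $B^H$, the inequality to be solved is
\[
 \prod_{l=1}^\ell 3^P \prod_{i=1}^P \binom{X_{i,l}+B}{B}\, 2^B \cdot 4 P\ell \;\geq\; \frac{(N_R/2)^H w^H}{B^H}.
\]

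First I would dispose of the trivial regimes: if $\ell \geq \frac{1}{5}\frac{H}{P}$ or $N_R \leq 2^8$ the claim holds via the scanning bound, so I may assume $\ell < \frac{1}{4}\frac{H}{P}$ (hence $4P\ell \leq H$ and $\log(4P\ell \cdot 3) \leq \log 3H$) and $N_R > 2^8$. Taking logarithms of the master inequality and using $\binom{X_{i,l}+B}{B} \leq (e(X_{i,l}+B)/B)^B$ gives
\[
 \ell P (B+\log 3H) + \sum_{l=1}^\ell \sum_{i=1}^P B\log\frac{e(X_{i,l}+B)}{B} \;\geq\; H \log \frac{N_R w}{2B}.
\]
As in the previous proof, the inner product $\prod_i(X_{i,l}+B)$ subject to $\sum_i X_{i,l}\leq \min(PM,H)$ is maximised when $X_{i,l} = \min(M,H/P)$, and combining the leading constants via $2e(x+1) \leq (2x)^{7/2}$ for $x \geq 1$ collapses the bound to
\[
 \ell P \left(\log 3H + \tfrac{7}{2} B \log \min\!\bigl(\tfrac{M}{B}, \tfrac{2H}{PB}\bigr)\right) \;\geq\; H \log \frac{N_R w}{2B}.
\]

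From here I split into the two cases that mirror Theorem~\ref{thm:copy_task}. In Case~1, $\log 3H \leq \frac{7}{2}B\log\min(M/B,2H/(PB))$, the denominator is absorbed into the $\log_d$ term and the bound immediately becomes $\ell \geq \frac{H}{7PB}\log_{\min(M/B,2H/(PB))}\frac{N_R w}{2B}$. In Case~2, the opposite inequality lets me apply Lemma~\ref{lem:bounds}(i) with $N_2=N_R$, $N_1=N_M$ (using the symmetric hypothesis $H/N_M \leq N_R^{1-\epsilon}$) to obtain $H \leq N_R^{1/\epsilon}$ and hence $\log 3H \leq \frac{5}{4\epsilon}\log N_R$, while Lemma~\ref{lem:bounds}(ii) bounds $B$ by a small power of $N_R$ so that $\log(N_R w/(2B)) \geq \frac{\epsilon}{2}\log N_R$ whenever $N_R \geq 1/\epsilon^{8/3}$. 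The ratio then yields $\ell \geq \frac{\epsilon}{10}\frac{H}{P}$.

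The only real obstacle is bookkeeping the constants in Case~2: making sure that the condition $N_R \geq 1/\epsilon^{8/3}$ is precisely what is needed so that $w$ drops out harmlessly (since $w\geq 1$) and so that the $\epsilon N_R^{5/8}$ term from Lemma~\ref{lem:bounds}(ii) dominates $N_R^{\epsilon/2}$, giving a clean constant $\epsilon/10$ rather than a more awkward expression. Everything else is a direct transcription of the backward-trace counting already set up in the paper, with the only substantive change being the replacement of the ``copy-task'' conformation count $\binom{N_R}{H/N_M}^{N_M} v^H$ by the ``sum-task'' count $(N_R/2)^H w^H$ together with the block-level abstraction factor $B^H$ coming from the $\min(M/B,2H/(PB))\leq N_R^{3/(7B)}$ slack analysed in Lemma~\ref{lem:bounds}(iii).
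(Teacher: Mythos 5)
Your proposal is essentially the paper's own proof: the same backward-trace counting bound, the same conformation count $(N_R/2)^H w^H$ with abstraction loss $B^H$, the same reduction to $X_{i,l}=\min(M,H/P)$, and the same two-case split with Lemma~\ref{lem:bounds}. One slip in the Case-2 bookkeeping, however, means the argument as written does not reach the stated constant: you compare $\epsilon N_R^{5/8}$ (from Lemma~\ref{lem:bounds}(ii)) against $N_R^{\epsilon/2}$, which is the target from the copy-task theorem; plugging $\log(N_Rw/(2B)) \geq \frac{\epsilon}{2}\log N_R$ into the ratio with denominator $\frac{5}{2\epsilon}\log N_R$ only yields $\ell \geq \frac{\epsilon^2}{5}\frac{H}{P}$, which is weaker than the claimed $\frac{\epsilon}{10}\frac{H}{P}$ when $\epsilon<\frac12$. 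The correct comparison here is against $N_R^{1/4}$: one needs $\log\bigl(\epsilon N_R^{5/8}w\bigr) \geq \frac14\log N_R$, and this is exactly what the hypothesis $N_R \geq \epsilon^{-8/3}$ (i.e. $\epsilon \geq N_R^{-3/8}$) delivers, since then $\epsilon N_R^{5/8} \geq N_R^{1/4}$; that is where the exponent $8/3$ comes from, not from dominating $N_R^{\epsilon/2}$. Two further minor points: your justification that $H/N_M \leq N_R/2$ "follows from $N_R \geq \epsilon^{-8/3} \geq 2$" is not right (it needs $N_R^{\epsilon}\geq 2$, which the paper also leaves implicit), and the factor $B^H$ has nothing to do with Lemma~\ref{lem:bounds}(iii) -- it comes purely from the abstraction (up to $B$ row indices per block hiding multiplicities and origins), while part (iii) of the lemma is only used for the best-case layout. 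With the Case-2 target corrected, the proof is a faithful reproduction of the paper's argument.
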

	\begin{proof}
		A lower bound on the minimal number of I/Os $\ell$ required for a family of programs that create the matrix vector product for $N_R \times N_M$ matrices with $H$ entries in mixed column layout is given by
		\begin{equation}
			\prod_{l=1}^\ell 3^P \prod_{i=1}^P {X_{i,l} + B \choose B} 2^B 4 P \ell \geq \kfrac{N_R}{2B}^H w^H \text{\,.}
		\end{equation}
		With similar arguments as in the proof of Theorem~\ref{thm:copy_task}, we can assume $\ell < \frac14 H/P$, and $N_R \geq 2^8$. Otherwise the theorem holds trivially.
		Taking logarithms and estimating binomial coefficients yields
		\[ \ell P (B + \log 3H) + \sum_{l=1}^\ell \sum_{i=1}^P B \log \frac{eX_{i,l} + eB}{B} \geq H \log \frac{N_R}{2B} + H \log w  \]
		As described in Section~\ref{sec:lb_copy_task}, the left-hand side is maximised for each $l$ when $X_{1,l} = \dots = X_{P,l} = \mins{M, H/P}$.
		Thus, we obtain
		\[ \ell P \left(B + \log 3H + B \log \frac{e\mins{M, H/P} + eB}{B}\right) \geq H \log \frac{N_Rw}{2B}  \]
		and isolating $\ell$, we can estimate
		\[ \ell \geq \frac{H}{P} \frac{\log \frac{N_Rw}{2B}}{\log 3H + \frac72 B \log \mins{\frac{M}{B}, \frac{2H}{PB}}} \text{.} \]
		where we used again $M \geq 3B$ and $H/ (PB) \geq 1$.

		\emph{Case 1:} For $\log 3H \leq \frac72 B \log \mins{\frac{M}{B}, \frac{2H}{PB}}$, the lower bound matches the sorting algorithm:
		\[ \ell \geq \frac{H}{7 PB} \log_{\mins{\frac{M}{B}, \frac{2H}{PB}}} \frac{N_Rw}{2B} \]

		\emph{Case 2:} For $\log 3H > \frac72 B \log \mins{\frac{M}{B}, \frac{2H}{PB}}$, we get
		\[ \ell \geq \frac{H}{P} \frac{\log \frac{N_Rw}{2B}}{2 \log 3H} \text{.}\]
		Using Lemma~\ref{lem:bounds}.i, and $3H \geq H^{5/4}$ for $H \geq 2^8$, we have
		\[ \ell \geq \frac{H}{P} \frac{\log \frac{N_Rw}{2B}}{\frac52 \log N_R^{1/\epsilon}} \]
		and with Lemma~\ref{lem:bounds}.ii, we get
		\[ \ell \geq \frac{H}{P} \frac{\log \epsilon N_R^{5/8}w}{\frac52 \log N_R^{1/\epsilon}} \geq \frac{\epsilon}{10} \frac{H}{P}\]
		for $N_R \geq 1/\epsilon^{8/3}$ which is matched by the direct algorithm.
	\end{proof}

	\subsection*{Column major layout with multiple output pairs}
	\label{sec:lb_column}
	For column major layout, the number of different abstract matrix conformations corresponds to the number of abstract conformation described in Section~\ref{sec:lb_copy_task} but with $N_M$ and $N_R$ exchanged.


	If a family of programs with $\ell$ I/Os is able to create the matrix vector product with each $N_R \times N_M$ matrix with $H$ non-zero entries to obtain $w \leq H/N_R$ vectors of row sum subsets, then
	\begin{equation}
		\label{eq:column_lb}
		\prod_{l=1}^\ell 3^P \prod_{i=1}^P {X_{i,l} + B \choose B} 2^B 4P\ell \geq {N_M \choose H/N_R}^{N_R} w^H / \tau_M
	\end{equation}
	with
	\begin{equation*}
		\tau_M \leq \begin{cases}
			3^H          & \text{if $B < H/N_M$}\\
			1            & \text{if $B = H/N_M$}\\
			(eBN_M/H)^H  & \text{if $B > H/N_M$}
		\end{cases}
	\end{equation*}
	has to hold.
	This yields the following theorem.
	
	\begin{theorem}
		\label{thm:column_major}
		Given block size $B$, internal memory $M \geq 3B$, number of mappers $N_M \geq 9^{1/\epsilon}$ and the number of processors $P \leq \frac{H}{B}$.
		Creating the combined matrix vector product for a sparse $N_R \times N_M$ matrix in column major layout with $H$ non-zero entries for $w$ output vectors, and $H/N_R \leq N_M^{1-\epsilon}$ and $H/N_M \leq N_R^{1-\epsilon}$ for $\epsilon > 0$ has (parallel) I/O-complexity
		\[ \ell \geq \mins{\frac{\epsilon^2}{5}\frac{H}{P}, \frac{H}{7PB} \log_{\mins{\frac{M}{B}, \frac{2H}{PB}}} \mins{\frac{N_M N_R w}{3H}, \frac{N_R w}{eB}}} \text{\,.} \]
	\end{theorem}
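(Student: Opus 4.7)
My plan is to follow the same backward-in-time counting scheme used in Theorem~\ref{thm:mixed_column}, since the underlying computational task (forming subset row sums for $w$ output vectors) is identical; only the initial layout changes from mixed-column to column-major. Consequently the program-trace bound (\ref{eq:trace}) carries over verbatim, and the starting inequality is already given in~(\ref{eq:column_lb}): the number of configurations reachable backward from the unique output configuration must be at least $\binom{N_M}{H/N_R}^{N_R} w^H / \tau_M$, where $\tau_M$ corrects for the fact that an abstract block hides the column from which each tuple originates. The counting of column-major conformations mirrors the row-major count in Theorem~\ref{thm:copy_task} with $N_M$ and $N_R$ swapped, which is exactly why the three $\tau_M$ cases match those of $\tau_R$ after the same swap.

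The execution from there is mechanical and parallels the two earlier proofs. I would take logarithms, lower-bound the binomial on the right by $(N_M N_R/H)^H$, and upper-bound each $\binom{X_{i,l}+B}{B}$ by $(e(X_{i,l}+B)/B)^B$. As in the preceding two theorems, $\prod_i \binom{X_{i,l}+B}{B}$ subject to $\sum_i X_{i,l} \le \min\{PM,H\}$ is maximized by $X_{i,l} = \min\{M, H/P\}$, and the same inequality $2e(x+1)\le(2x)^{7/2}$ (for $x\ge 1$, valid because $M\ge 3B$ and $H\ge PB$) produces the $\tfrac72 B\log\min\{M/B, 2H/(PB)\}$ term in the denominator. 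Substituting the three cases of $\tau_M$ gives the two terms inside the inner $\min$: the $B < H/N_M$ case with $\tau_M = 3^H$ produces $\tfrac{N_M N_R w}{3H}$, the $B > H/N_M$ case with $\tau_M = (eBN_M/H)^H$ produces $\tfrac{N_R w}{eB}$, and the equality case is dominated by either.

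After isolating $\ell$, the standard split on whether $\log 3H$ or $\tfrac72 B\log\min\{M/B,2H/(PB)\}$ dominates the denominator yields the two branches of the outer $\min$. In the sorting-dominated case the claimed bound is immediate. In the other case I would invoke Lemma~\ref{lem:bounds}(i) with $N_1 = N_R$ and $N_2 = N_M$ to bound $\log 3H \le \tfrac{5}{4\epsilon}\log N_M$, then Lemma~\ref{lem:bounds}(ii) to bound $B \le \tfrac{1}{e\epsilon} N_M^{3/8}$, reducing the numerator to $\log\min\{\tfrac13 N_M^\epsilon, \epsilon N_M^{5/8} w\}$. The hypothesis $N_M \ge 9^{1/\epsilon}$ then lets me replace this by $\log N_M^{\epsilon/2}$ just as in the final step of Theorem~\ref{thm:copy_task}, giving the $\tfrac{\epsilon^2}{5}\tfrac{H}{P}$ term.

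I expect the only real obstacle to be bookkeeping: making sure the $\tau_M$ case analysis lines up with the two terms inside the inner $\min$ with the correct constants ($3$ and $e$), and checking that the reductions from the preliminary assumptions $\ell < \tfrac14 H/P$ and $N_M > 2^8$ (which are free, for otherwise the theorem is trivial by scanning or by a constant-depth logarithm) are valid verbatim in this setting. Once those are handled, no new ideas beyond the two preceding theorems are needed.
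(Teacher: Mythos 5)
Your overall plan is the paper's own: the paper proves Theorem~\ref{thm:column_major} exactly by combining the backward-in-time trace bound of the mixed-column section with the conformation count of Theorem~\ref{thm:copy_task} with $N_M$ and $N_R$ exchanged (this is what (\ref{eq:column_lb}) and the two-case $\tau_M$ encode), and your algebra up to isolating $\ell$, the $\frac72 B\log\min\{M/B,2H/(PB)\}$ denominator, and the Case~1 branch is correct.

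There is, however, a genuine error in your Case~2 endgame. You apply Lemma~\ref{lem:bounds} with $N_2=N_M$ and claim the numerator reduces to $\log\min\{\tfrac13 N_M^\epsilon,\ \epsilon N_M^{5/8}w\}$. But the second term inside the minimum is $\frac{N_R w}{eB}$, not $\frac{N_M w}{eB}$: the bound $B\le\frac{1}{e\epsilon}N_M^{3/8}$ only gives $\frac{N_R w}{eB}\ge \epsilon N_R w\, N_M^{-3/8}$, which need not be anywhere near $N_M^{\epsilon/2}$ -- it can even be below $1$ whenever $N_R\ll N_M^{3/8}$ (e.g.\ $N_R$ of order $N_M^{\epsilon}$ is compatible with $H\ge N_M$, $H/N_R\le N_M^{1-\epsilon}$ and $H/N_M\le N_R^{1-\epsilon}$), so the chain from there to $\ell\ge\frac{\epsilon^2}{5}\frac{H}{P}$ breaks. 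The exchange $N_M\leftrightarrow N_R$ has to be carried through the endgame as well: invoke Lemma~\ref{lem:bounds} with $N_1=N_M$, $N_2=N_R$ (using $H/N_R\le N_M^{1-\epsilon}$ and $H\ge N_M$), which yields $H\le N_R^{1/\epsilon}$ and $B\le\frac{1}{e\epsilon}N_R^{3/8}$, and bound the first term via $H/N_M\le N_R^{1-\epsilon}$ as $\frac{N_M N_R w}{3H}\ge\frac13 N_R^{\epsilon}$; then the numerator becomes $\log\min\{\tfrac13 N_R^\epsilon,\ \epsilon N_R^{5/8}\}$ over a denominator $\frac52\cdot\frac1\epsilon\log N_R$, and the final step is verbatim the end of the proof of Theorem~\ref{thm:copy_task} with $N_R$ in place of $N_M$, requiring a condition of the form $N_R\ge 9^{1/\epsilon}$ (small $N_R\le 2^8$ being handled by the trivial scanning bound, as in the other proofs). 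Note that the hypothesis ``$N_M\ge 9^{1/\epsilon}$'' printed in the theorem is carried over from Theorem~\ref{thm:copy_task}; under the exchange it is the corresponding condition on $N_R$ that your Case~2 actually uses, so you cannot rescue the $N_M$-based reduction as written.
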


	\subsection*{Best-case layout with multiple input and output pairs}
	\label{sec:lb_best}
	For the best-case layout, the algorithm is allowed to choose the layout of the matrix.
	This makes the task of building row sums become trivial by setting the layout of the matrix to row major layout.
	Hence, we have to follow the movement and copying of input vector elements as well.

	For this task, we consider both, multiple input and multiple output vectors.
	Recall that any intermediate pair stems from exactly one input element, and it is required for the computation of only a single output element.
	For each of the $H$ non-zero entries in our matrix there is not only a choice of its position but also the choice from which of the $v$ element it stems from and which of the $w$ elements is its destination.
	This results in a total number of $\binom{N_M N_R}{H} v^H w^H$ different tasks.
	We assume that $v \geq H/N_M$ and $w \geq H/N_R$ such that all input and output elements can be useful.

	The task of creating this type of matrix vector product can be seen as spreading input elements to create elementary products $a_{ij} x^{(k_{ij})}_j$, and then collecting these elementary products to form the output vectors.
	To describe these two main tasks, we distinguish between matrix entries $a_{i,j}$, elementary products $a_{ij} x^{(k_{ij})}_j$ and partial sums $\sum_{j \in \mathcal{S}} a_{ij} x^{(k_{ij})}_j$ which we trace as described in Section~\ref{sec:lb_column}, and input vector elements which will be followed as described below.
	To the first group (matrix entries, elementary products, partial sums), we also refer as row elements and we call the row index their index.

	\paragraph{Normalisation}
	At first, we make again use of the normalisations described already.
	In doing so, we distinguish time between input vector elements and row elements.
	For row elements, we apply the normalisations described in Section~\ref{sec:lb_worstcase} (multiple copies are removed, row elements are summed immediately and unused elements are removed from internal memory after an I/O).
	For input vector elements, we use the normalisations of Section~\ref{sec:lb_copy_task}.
	Furthermore, we create an elementary product $a_{ij} x^{(k_{ij})}_j$ as soon as both elements reside in memory for the first time, i.e. immediately after an input.
	The elementary product is stored at the internal memory position where $a_{ij}$ was present.
	This is possible since the variable $a_{ij}$ is not required for any other operation than a multiplication with $x^{(k_{ij})}_j$, and can hence be replaced.
	This normalisation does not increase the memory usage at any time.

	\paragraph{Abstraction}
	As the sections before and analogue to~\cite{TCS10}, we consider an abstraction of the current configuration.
	In this, we define the set of tuples $\mathcal{M}^V_{i,j} \subseteq \{(1,1),\dots,(N_M,v)\}$, $|\mathcal{M}^V_{i,j}| \leq M$ of input vector elements that are in internal memory of processor $i$ after the $j$th parallel I/O.
	Similar, let $\mathcal{M}^R_{i,j} \subseteq \{(1,1),\dots,(N_R,w)\}$, $|\mathcal{M}^R_{i,j}| \leq M$ be the set of tuples of row elements residing in internal memory of processor $i$ after the $j$th parallel I/O.
	Additionally, for each block on disk, we consider the set of tuples of input vector elements contained and call this over all blocks on disk together with $\mathcal{M}^V_{1,j}, \dots \mathcal{M}^V_{P,j}$ the abstract input configuration after the $j$th parallel I/O.
	Analogously, we define the abstract row configuration after the $j$th parallel I/O to be $\mathcal{M}^R_{1,j}, \dots, \mathcal{M}^R_{P,j}$ together with the sets of tuples of row elements of each block on disk.

	\paragraph{Description of programs}
	The final abstract row configuration is again unique over all programs that create the matrix vector product for fixed $N_R$.
	Hence, we can apply our analysis from Section~\ref{sec:lb_column} to describe the number of different abstract row configurations that can be present during the execution of a program with $\ell$ I/Os after the $l$th I/O.
	To describe the abstract input configurations, we can use the analysis from Section~\ref{sec:lb_copy_task}.
	Together, the number of distinct (input vector / row) configurations that can result from a given (initial / final) configuration after $\ell$ parallel I/O is bounded by~(\ref{eq:trace}).

	So far, we described all operations concerning only input vector elements, or only row elements, but no interaction between these elements.
	There is only one operation that performs an interaction, and that is multiplying an input element with a non-zero entries which creates a row element.
	For each non-zero entry $a_{ij}$, exactly one multiplication with a $x^{(k_{ij})}_j$ is performed during the whole execution of the program.
	Hence, if in our abstraction, a row element with index $i$ was created by multiplication from an input variable with index $j$, this fixes $a_{ij}$ to be non-zero.
	We normalised programs such that an elementary product is created immediately when both elements appear in internal memory together for the first time.
	This can only happen after an input.
	Thus, for each input there are at most $B$ new elements in internal memory and some $X_{i,l}$ elements that are already in internal memory.
	Hence, there are at most $B X_{i,l}$ possibilities where a multiplication can be done for each processor after each I/O.
	In total, we get $Y = \sum_{l=1}^\ell \sum_{i=1}^P B X_{i,l} \leq \ell B \cdot \mins{PM, H}$ possibilities where a multiplication can be performed.
	Together with the traces that describe the movement and copying / summing of input and row elements, this give a unique description of the abstract matrix conformation.

	\begin{theorem}
		Given block size $B$, internal memory $M \geq 3B$, and the number of processors $P \leq \frac{H}{B}$.
		Creating the combined matrix vector product for a sparse $N_R \times N_M$ matrix in best-case layout with $H$ non-zero entries for $v \leq H/N_M$ input and $w \leq H/N_R$ output vectors, with $H/N_R \leq N_M^{1/6}$ and $H/N_M \leq N_R^{1/6}$ has (parallel) I/O-complexity
		\[ \ell \geq \mins{\frac{1}{20} \frac{H}{P}, \frac{H}{14 PB} \log_{\mins{\frac{M}{B}, \frac{2H}{PB}}} \frac{N_M N_R v w}{H \mins{M, H/P}} } \text{\,.} \] 
	\end{theorem}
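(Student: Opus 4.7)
The plan is to extend the counting arguments of Theorems~\ref{thm:copy_task} and~\ref{thm:column_major} by simultaneously tracking two independent families of abstract configurations — one for input vector elements and one for row elements — together with a separate inventory of where in the program the $H$ multiplications actually occur. Given the normalisations laid out just before the theorem (multiplications happen exactly when an input first meets its matching non-zero in internal memory, and row elements with matching destination are summed immediately), an abstract matrix conformation, including the $v^H w^H$ choices of origin/destination vectors, is uniquely determined by the triple (input-configuration trace, row-configuration trace, selection of multiplication slots). So the product of the three corresponding upper bounds must dominate the number of admissible tasks $\binom{N_M N_R}{H}$, while the $v$ and $w$ factors are absorbed into the sizes of the tuple universes $\{1,\dots,N_M\}\times\{1,\dots,v\}$ and $\{1,\dots,N_R\}\times\{1,\dots,w\}$ on which the two traces live.

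First I would invoke inequality~(\ref{eq:trace}) twice in parallel, once with $X_{i,l}^V$ (input tuples present on processor $i$ after the $l$th I/O) and once with $X_{i,l}^R$ (row tuples). Then I would multiply by $\binom{Y}{H}\leq (eY/H)^H$ with $Y\leq \ell B\mins{PM,H}$, as stated in the excerpt, to account for the choice of multiplication slots. Combining and taking logarithms, the left-hand side contributes, per I/O, the usual $P(B+\log 3H)+\sum_i B\log\frac{e(X_{i,l}+B)}{B}$ from each of the two traces, plus a $H\log\frac{eY}{H}$ term collected over the whole execution. The right-hand side contributes $H\log\frac{N_MN_Rvw}{H}$ (after the standard estimate $\binom{N_MN_R}{H}\geq (N_MN_Rvw/H)^H$ combined with the $v^H w^H$ factor). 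As in the earlier proofs the sums on the left are maximised by $X_{i,l}=\mins{M,H/P}$ for every $i,l$, so the multiplication term becomes $\ell B\mins{M,H/P}$ and is pulled into the numerator logarithm, producing an inequality of the form
\[
\ell\;\geq\;\frac{H}{P}\cdot\frac{\log\frac{N_MN_Rvw}{cH\mins{M,H/P}}}{\log 3H\;+\;\frac{7}{2}B\log\mins{\frac{M}{B},\frac{2H}{PB}}}.
\]

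The proof then splits, just as in Theorems~\ref{thm:copy_task} and~\ref{thm:column_major}, on which term in the denominator dominates. In the regime $\log 3H\leq \frac{7}{2}B\log\mins{M/B,2H/(PB)}$ one reads off directly the claimed $\frac{H}{14PB}\log_d\frac{N_MN_Rvw}{H\mins{M,H/P}}$ bound. In the opposite regime the denominator is controlled by $\log 3H$, and here the sharpened assumption $H/N_R\leq N_M^{1/6}$, $H/N_M\leq N_R^{1/6}$ does real work: part~(i) of Lemma~\ref{lem:bounds} (applied with $\epsilon=1/6$ in one direction and similarly in the other) bounds $\log 3H$ by a constant times $\log\maxs{N_M,N_R}$, while parts~(ii) and~(iii) are used to absorb the extra $B$, $v$, $w$ and $\mins{M,H/P}$ factors that now appear in the numerator compared with the copy/column-major proofs; the net effect is that the numerator remains of order $\log\maxs{N_M,N_R}^{\Omega(1)}$, so the ratio is $\Omega(H/P)$ with an explicit small constant (the computation yields $\frac{1}{20}\frac{H}{P}$).

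The main obstacle will be the Case~2 arithmetic. Because the best-case layout requires the numerator logarithm to carry both $v$, $w$, and a $\mins{M,H/P}$ in the denominator, there is less slack than in either of the one-sided proofs, and this is exactly why the hypothesis has to be strengthened from $N_M^{1-\epsilon}$ to $N_M^{1/6}$ (and symmetrically for $N_R$). The delicate part is therefore the simultaneous application of Lemma~\ref{lem:bounds}(i)--(iii) so that each of $B$, $v$, $w$, and $\mins{M,H/P}$ contributes at most a $N_M^{o(1)}$ or $N_R^{o(1)}$ factor, leaving enough logarithmic mass in $N_MN_R$ to beat $\frac{5}{2}\log N_M^{6}$ and yield a linear lower bound with the stated constant. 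The secondary technical step — verifying that the triple (input trace, row trace, multiplication-slot selection) really is injective into abstract matrix conformations once the two abstractions already encode the $v$ and $w$ assignments — is a direct adaptation of the injectivity argument used for the column-major case and should add no new ideas.
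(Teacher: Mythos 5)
Your proposal follows the paper's proof in all essentials: the same normalisations (elementary products formed immediately when an input element first meets its matrix entry, row elements summed at once), the same two abstractions ((column index, origin vector) tuples for input elements and (row index, destination) tuples for row elements), the count $\binom{Y}{H}$ of multiplication slots with $Y \le \ell B \mins{PM,H}$, the comparison against the $\binom{N_M N_R}{H}v^Hw^H$-type task count, and the same two-case analysis in which the strengthened exponents $1/6$ feed Lemma~\ref{lem:bounds}(i)--(iii) to force the linear $\Omega(H/P)$ branch. Two points, however, would stop your write-up from delivering the stated constants. First, you invoke~(\ref{eq:trace}) twice and multiply the two counts; the paper applies the per-I/O counting only once, to the combined configuration, because input tuples and row tuples occupy the same physical blocks and internal memories, so $X_{i,l}$ is the total number of elements held by processor $i$ and a single factor $3^P\prod_i\binom{X_{i,l}+B}{B}2^B\cdot 4P\ell$ already covers the simultaneous evolution of both abstract configurations. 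With the squared trace the denominator of your key inequality roughly doubles, which is inconsistent with the inequality you then display (it has the single-trace denominator $\log 3H + \frac72 B\log\mins{\frac{M}{B},\frac{2H}{PB}}$) and would push the constants below the claimed $\frac{1}{20}\frac{H}{P}$ and $\frac{H}{14PB}$.

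Second, you ``pull the multiplication term into the numerator logarithm'' without saying how: $\binom{Y}{H}$ puts $\ell$ itself inside the logarithm, so the inequality is self-referential in $\ell$. The paper resolves this with Lemma A.2 of~\cite{TCS10} (from $x \ge \log_b(s/x)/t$ conclude $x \ge \log_b(s\cdot t)/(2t)$); this step is precisely where the extra factor $2$ turning $\frac{H}{7PB}$ into $\frac{H}{14PB}$ comes from, and it is what converts $e\ell B\mins{PM,H}$ into essentially $H\mins{M,H/P}$ without leaving a spurious $B$ inside the logarithm --- the cruder substitution $\ell \le H/(4P)$ would only give $\log_d\frac{N_MN_Rvw}{HB\mins{M,H/P}}$, which is weaker than the stated bound when the logarithm is small. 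With these two repairs (one application of the combined trace count, and the self-improvement step for $\ell$), your argument coincides with the paper's.
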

	\begin{proof}
		With the above observations, for a family of programs with $\ell$ I/Os that create the matrix vector product for any $N_R \times N_M$ matrix with $H$ non-zero entries where the layout of the matrix can be chosen by the program it has to hold
		\[ {Y \choose H} \cdot \prod_{l=1}^\ell 3^P \prod_{i=1}^{P} {X_{i,l} + B \choose B} 2^B 4 P \ell \geq {N_R \choose H/N_M}^{N_M} v^H w^H \text{.}\]
		Like in the proofs before, we argue that $\ell < \frac{H}{4P}$ and $N_R \geq 2^8$, otherwise the claim holds trivially.
		The left-hand side is again maximised for $X_{i,l} = \mins{M, H/P}$ for all $i, l$.
		Estimating binomials and taking logarithms, we have
		\[ \ell P \left(\log 3H + \frac72 B \log \mins{\frac{M}{B}, \frac{2H}{PB}} \right) \geq H \left( \log \frac{N_M N_R vw}{H} - \log \frac{e \ell B \mins{PM, H}}{H} \right) \]
		where we followed the calculations given for the other layouts. 
		Reordering terms, we obtain
		\[ \ell \geq \frac{H}{P} \frac{\log \frac{N_M N_R v w}{e \ell B \mins{PM, H}}}{\log 3H + \frac72 B \log d}\]
		with $d = \mins{M/B, 2H/(PB)}$.
		By Lemma A.2 in~\cite{TCS10}, $x \geq \frac{\log_b(s/x)}{t}$ implies $x \geq \frac{\log_b (s\cdot t)}{2t}$.
		For $x=\ell$, $s = \frac{N_M N_R}{e B \mins{PM, H}}$ and $t = \frac{P}{H} \left(\log 3H + \frac72 B \log d \right)$ this results in 
		\[ \ell \geq \frac{H}{2P} \frac{\log \frac{N_M N_R v w P \left(\log 3H + \frac72 B \log d\right)}{e B H \mins{PM, H}}}{\log 3H + \frac72 B \log d} \text{.}\]
		\emph{Case 1:}
		For $\log 3H \leq \frac72B \log d$, we get a lower bound of
		\[ \ell \geq \frac{H}{14P} \frac{\log \frac{N_M N_R v w}{H \mins{M, H/P}}}{B \log d} \text{.}\]
		\emph{Case 2:}
		For $\log 3H > \frac72B \log d$, we get
		\[ \ell \geq \frac{H}{4P} \frac{\log \frac{N_M N_R v w\log 3H}{e B H \mins{M, H/P}}}{\log 3H } \geq \frac{H}{4P} \frac{\log \frac{N_M N_R}{e H B d}}{\frac54 \log H }\]
		for $H \geq 2^8$.
		Setting $H/N_M \leq N_R^\frac16$ and using Lemma~\ref{lem:bounds}.i with $\epsilon = \frac56$, we get 
		\[ \ell \geq \frac{H}{5P} \frac{\log \frac{N_R^{5/6}}{e B d}}{\log N_R^{6/5} } \]
		for $N_R \geq 9$.
		Using Lemma~\ref{lem:bounds}.iii, with $N_1 = N_M$ and $N_2 = N_R$, we get
		\[ \ell \geq \frac{H}{6P} \frac{\log \frac{\frac56 N_R^{5/6}}{N_R^{3/8} N_R^{3/(7B)}}}{\log N_R } > \frac{H}{6P} \frac{\log \frac56 N_R^{1/3}}{\log N_R } \geq \frac{H}{20 P} \]
		for $B \geq 4$, using $N_R \geq 2^8 > (\frac65)^{30}$ such that $\frac56 N^{1/3} \geq N^{3/10}$.
	\end{proof}

	\subsection*{Transposing bound}
	Another method to obtain a lower bounds for the I/O-complexity is presented in~\cite{av88}.
	They use a potential function to lower bound the complexity of dense matrix transposition.
	This bound can also be applied to sparse matrix transposition, if the matrix is given in column major layout.

	In the following, we extend the potential to the PEM model.
	For each block $j$ written on disk at time $t$, its togetherness rating is defined by
	\[ \mathcal{B}_j(t) = \sum_{i=1}^{H/B} f(x_{ij}(t)) \]
	where $x_{ij}(t)$ is the number of elements present in block $j$ at time $t$ that belong to the $i$th output block, and
	\[ f(x) = \begin{cases}
		 x \log x & \text{for $x > 0$} \\
		 0 & \text{otw.}
	\end{cases}\]
	Similarly, to internal memory of processor $k$, a togetherness rating of
	\[ \mathcal{M}_k(t) = \sum_{i=1}^{H/B} f(y_{ik}(t)) \]
	is assigned with $y_{ik}(t)$ being the number of elements that belong to output block $i$ and reside at time $t$ in internal memory of processor $k$.
	The potential is then defined by
	\[ \Phi(t) = \sum_{k=1}^P \mathcal{M}_k(t) + \sum_{j=1}^\infty \mathcal{B}_j(t) \text{\,.} \]
	In~\cite{av88}, it is shown that the change of the potential induced by an I/O is $\Delta\Phi(t) < 2B \log \frac{M}{B}$.
	To extend the bound for the PEM-model, we have to restate this argument for parallel I/Os.
	\begin{lemma}
		The increase of the potential during one parallel I/O is bounded by
		\[ \Delta\Phi(t + 1) \leq PB \log 2e + PB \log \frac{\mins{M, H/P}}{B} \text{\,.} \]
	\end{lemma}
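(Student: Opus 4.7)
The plan is to decompose the parallel I/O into the $P$ individual per-processor operations, bound the potential change contributed by each of them using a single-I/O argument in the style of~\cite{av88}, and then aggregate across processors using convexity. Since $\Phi$ is additive over internal memories and over disk blocks, and since the CREW discipline ensures that outputs from different processors target distinct blocks, the change decomposes as $\Delta\Phi(t+1) = \sum_{k=1}^P \Delta\Phi_k$, where $\Delta\Phi_k$ is the contribution of processor $k$'s single I/O.

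For the per-processor bound, consider a processor $k$ performing an output to a fresh block containing $x_i$ elements of output-type $i$ (with $\sum_i x_i \leq B$), drawn from a memory holding $y_{ik}$ elements of type $i$. Its contribution is
\[ \Delta\Phi_k \;=\; \sum_i \bigl[f(x_i) + f(y_{ik} - x_i) - f(y_{ik})\bigr]. \]
Applying the log-sum inequality (equivalently, convexity of $f(x)=x\log x$) together with the constraint $\sum_i x_i \leq B$, this can be bounded by a quantity of the form $B\log(e\cdot y_k^{\mathrm{tot}}/B) + O(B)$, where $y_k^{\mathrm{tot}} = \sum_i y_{ik}$ is the total number of elements currently held by processor $k$. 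The input case is handled analogously: since under CREW the read block on disk is unchanged, only the increase of the processor's memory rating needs bounding, and the same estimate applies.

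To aggregate, I would combine two elementary facts about $y_k^{\mathrm{tot}}$: individually, $y_k^{\mathrm{tot}} \leq M$; globally, $\sum_k y_k^{\mathrm{tot}} \leq H$, since only $H$ non-zero entries ever exist simultaneously. The global bound together with concavity of the logarithm (Jensen) gives $\sum_k \log(y_k^{\mathrm{tot}}/B) \leq P\log(H/(PB))$, while the individual bound gives $\sum_k \log(y_k^{\mathrm{tot}}/B) \leq P\log(M/B)$. Taking the better of the two yields the claimed $\Delta\Phi(t+1) \leq PB\log(2e) + PB\log(\min\{M, H/P\}/B)$.

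The main obstacle is pinning down the per-processor estimate with the correct constants: the Aggarwal--Vitter-style manipulation of $f(x)=x\log x$ must be carried out carefully to produce precisely the $2e$ factor, and it must be verified that the bound degrades gracefully into a function of $y_k^{\mathrm{tot}}$ rather than just $M$, so that subsequent averaging actually yields $\min\{M, H/P\}$ inside the final logarithm. The aggregation step itself is then a routine application of Jensen's inequality.
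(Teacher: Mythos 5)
You have the two cases the wrong way around, and the one you dismiss is the only one that matters. An output never increases $\Phi$: your expression $\sum_i\bigl[f(x_i)+f(y_{ik}-x_i)-f(y_{ik})\bigr]$ is nonpositive by convexity of $f$ (the paper disposes of outputs in one sentence), so the careful log-sum work you invest there is vacuous. The potential grows only on an \emph{input}, and there your argument has a genuine gap: you assert that ``the read block on disk is unchanged, only the increase of the processor's memory rating needs bounding, and the same estimate applies.'' The memory-rating increase alone is $\sum_i\bigl[f(y_{ik}+x_i)-f(y_{ik})\bigr]$, i.e.\ without the crucial $-f(x_i)$ term, and this is \emph{not} bounded by $B\log(e\,y_k^{\mathrm{tot}}/B)+O(B)$: a block of $B$ elements all destined for the same output block, read into an empty memory, increases the rating by $B\log B$, which exceeds the lemma's bound whenever $B$ is large relative to $\min\{M,H/P\}/B$ (e.g.\ $M=3B$, $H/P=\Theta(B)$). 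The lemma is true only because copies are excluded for the transposition task, so a read consumes the block's elements and its togetherness rating $\sum_i f(x_i)$ leaves the disk side of the potential; the per-input change is then $\sum_i\bigl[f(y_{ik}+x_i)-f(y_{ik})-f(x_i)\bigr]$, which is exactly what the paper analyses. Taken literally, your ``block stays on disk and memory gains copies'' accounting makes the stated bound false, so this is not a cosmetic point.

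Once the input change is written with the $-f(x_i)$ term, your plan does go through and is essentially the paper's argument in a different packaging: the paper aggregates globally in one step, normalising all $x_{ij_k},y_{ik}$ over processors and output indices, recognising a negative Kullback--Leibler divergence (the global log-sum inequality), and then bounding $X\log(Y/X)$ using $X\le PB$ and $Y\le\min\{PM,H\}$; you instead prove a per-processor log-sum bound in terms of $y_k^{\mathrm{tot}}$ and then apply Jensen over the $P$ processors using $y_k^{\mathrm{tot}}\le M$ and $\sum_k y_k^{\mathrm{tot}}\le H$. These are the same inequalities organised differently and both produce the $\min\{M,H/P\}$ inside the logarithm, so your aggregation step is sound; what is missing is a correct per-processor estimate in the input case, the only case in which $\Phi$ increases.
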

	\begin{proof}
		Obviously, an output can never increase the potential.
		An input of block $j$ by processor $k$ in contrast induces the following change in the potential
		\[ \Delta\Phi(t + 1) = \sum_{i=1}^{H/B} f(y_{ik}(t) + x_{ij}(t)) - f(y_{ik}(t)) - f(x_{ij}(t)) \text{\,.} \] 
		Hence the maximal increase of the potential function is bounded by
		\[ \Delta\Phi(t + 1) \leq \sum_{k=1}^P \sum_{i=1}^{H/B} f(y_{ik}(t) + x_{ij_k}(t)) - f(y_{ik}(t)) - f(x_{ij_k}(t)) \text{\,.} \]
		where $j_k$ is the index of the block input by processor $k$.

		Let $I_{kj_k}(t)$ be the set of indices $i$ such that $x_{ij_k}(t) \geq 1$ and $y_{ik}(t) \geq 1$.
		Substituting $f(x)$, we obtain
		\begin{eqnarray*}
			\Delta\Phi(t + 1) & \leq & \sum_{k=1}^P \sum_{i \in I_{kj_k}(t)} y_{ik}(t) \log \frac{y_{ik}(t) + x_{ij_k}(t)}{y_{ik}(t)} + x_{ij_k}(t) \log \frac{y_{ik}(t) + x_{ij_k}(t)}{x_{ij_k}(t)} \\
				& \leq & \sum_{k=1}^P \sum_{i \in I_{kj_k}(t)} \frac{x_{ij_k}(t)}{\ln 2} + x_{ij}(t) \log \frac{y_{ik}(t) + x_{ij_k}(t)}{x_{ij_k}(t)} \\
				& \leq & PB \log e + \sum_{k=1}^P \sum_{i \in I_{kj_k}(t)} x_{ij_k}(t) \log \frac{2 y_{ik}(t)}{x_{ij_k}(t)} \\
				& \leq & PB \log 2e + \sum_{k=1}^P \sum_{i \in I_{kj_k}(t)} x_{ij_k}(t) \log \frac{y_{ik}(t)}{x_{ij_k}(t)} \text{\,.}
		\end{eqnarray*}

		Now let $\sum_{k=1}^P \sum_{i \in I_{kj_k}(t)} x_{ij_k}(t) = X(t)$ and $\sum_{k=1}^P \sum_{i \in I_{kj_k}(t)} y_{ij}(t) = Y(t)$ for $1 \leq k \leq P$.
		In order to upper bound $\Delta\Phi(t + 1)$, we substitute $\hat{x}_{ij_k}(t) = x_{ij_k}(t) / X(t)$ and $\hat{y}_{ik}(t) = y_{ik}(t) / Y(t)$ such that we get
		\begin{eqnarray*}
			\Delta\Phi(t + 1) & \leq & PB \log 2e + \sum_{k=1}^P \sum_{i \in I_{kj_k}(t)} X(t) \hat{x}_{ij_k}(t) \left[ \log \frac{\hat{y}_{ik}(t)}{\hat{x}_{ij_k}(t)} + \log \frac{Y(t)}{X(t)} \right] \nonumber \\
				& = & PB \log 2e + X(t) \log \frac{Y(t)}{X(t)} + X(t) \sum_{k=1}^P \sum_{i \in I_{kj_k}(t)} \kern-.8em \hat{x}_{ij_k}(t) \log \frac{\hat{y}_{ik}(t)}{\hat{x}_{ij_k}(t)} \text{\,.}
		\end{eqnarray*}

		The last sum describes a negative Kullback–Leibler divergence.
		Note that for fixed $k$ and $t$, $\hat{x}_{ij_k}(t)$ and $\hat{y}_{ik}(t)$ constitute a probability distribution over $\{1,\dots,k\} \times I_{kj_k}(t)$.
		It is well known that the Kullback–Leibler divergence is minimised when both probability distributions equal, and positive otherwise.
		Hence, since it appears in a negative form, the last term is upper bounded by $0$.

		Considering the second term, we obviously have $X(t) \leq PB$ and $Y(t) \leq \mins{PM, H}$.
		Thus, $X(t) \log \frac{Y(t)}{X(t)} \leq X(t) \log \frac{\mins{PM, H}}{X(t)}$ which is in term upper bounded by $B \log \mins{\frac{M}{B}, \frac{H}{PB}}$ since $\mins{PM, H} \geq B$.
	\end{proof}

	Since we can exclude copy operations for this task, the final potential is obviously $\Phi(\ell) = H \log B$. The initial potential in contrast has to be considered for each matrix separately.
	W.l.o.g., in the following let $N_M \geq N_R$.
	For $H/N_M \geq B$, any matrix that is row-wise $H/N_M$-regular and column-wise $H/N_R$-regular has an initial potential of $\Phi(0) = 0$:
	Observe that in this case each input block intersects with an output block in at most one element.
	Hence, the potential yields a lower bound of $\W{\frac{H}{PB} \log_{d} B}$ with $d = \mins{\frac{M}{B}, \frac{H}{PB}}$.

	For $H/N_M < B$, we consider the following matrix.
	All non-zero entry $a_{ij}$ are located at coordinates where $(i-1)H/N_R+1 \leq j \leq iH/N_R \mod N_M$ holds.
	Observe that any matrix with such a conformation stored in column or row major layout corresponds directly to a dense $h/N_M \times N_M$ matrix in column major layout, row major respectively.
	Thus, conform to \cite{av88} we get an initial potential of $\Phi(0) = h \log \maxs{1, B/\frac{h}{N_x}, B/N_x, B^2/h}$.
	Hence, we get for this dense matrix a lower bound of
	\begin{equation}
		\label{eq:transposing_bound}
		\W{\frac{H}{PB} \log_d \mins{B, \frac{H}{N_M}, N_M, \frac{H}{B}}} \text{\,.}
	\end{equation}

	\subsubsection{Combining the bounds}
	Consider a matrix given in column major layout.
	For the scenario $N_M > B > H/N_M$, the minimum breaks down to the term $H/N_M$.
	Combining the results from Theorem~\ref{thm:column_major} with the above bound we get $\W{\frac{H}{PB} \left(\log_d\frac{H}{N_M} + \log_d\frac{N_MN_R}{H} \right)}$ which is bound from below by $\W{\frac{H}{PB} \log_d N_R}$.
	Similar observations hold for $N_R > B > H/N_R$.
	Considering the other cases for the minimum in~(\ref{eq:transposing_bound}), we get a lower bound of
	\[ \W{\mins{ \frac{H}{P}, \frac{H}{PB} \log_d\mins{\frac{N_MN_RB}{H}, N_M, N_R, \frac{H}{B}}} } \]
	 I/Os for matrices in column major layout.
	
	Given a matrix in mixed column layout, we can apply~(\ref{eq:transposing_bound}) as well since column major is a special case of the mixed column layout.
	Hence, with similar considerations, we obtain a lower bound of
	\[ \W{\mins{\frac{H}{P}, \frac{H}{PB} \log_d\mins{N_R, \frac{H}{B}}} } \]
	 I/Os for matrices in mixed column layout.
\end{appendix}

\end{document}